\theoremstyle{plain}
\DeclareMathAlphabet{\pala}{OT1}{pag}{m}{sl}
\DeclareMathAlphabet{\eurm}{U}{eur}{m}{n}
\DeclareMathAlphabet{\mathbsf}{OT1}{cmss}{bx}{n}
\DeclareMathAlphabet{\mathssf}{OT1}{cmss}{m}{sl}
\DeclareMathAlphabet{\mathcsf}{OT1}{cmss}{sbc}{n}
\newcommand{\randomvalue}[1]{{\uppercase{#1}}}
\DeclareSymbolFont{bsfletters}{OT1}{cmss}{bx}{n}  
\DeclareSymbolFont{ssfletters}{OT1}{cmss}{m}{n}
\DeclareMathSymbol{\bsfGamma}{0}{bsfletters}{'000}
\DeclareMathSymbol{\ssfGamma}{0}{ssfletters}{'000}
\DeclareMathSymbol{\bsfDelta}{0}{bsfletters}{'001}
\DeclareMathSymbol{\ssfDelta}{0}{ssfletters}{'001}
\DeclareMathSymbol{\bsfTheta}{0}{bsfletters}{'002}
\DeclareMathSymbol{\ssfTheta}{0}{ssfletters}{'002}
\DeclareMathSymbol{\bsfLambda}{0}{bsfletters}{'003}
\DeclareMathSymbol{\ssfLambda}{0}{ssfletters}{'003}
\DeclareMathSymbol{\bsfXi}{0}{bsfletters}{'004}
\DeclareMathSymbol{\ssfXi}{0}{ssfletters}{'004}
\DeclareMathSymbol{\bsfPi}{0}{bsfletters}{'005}
\DeclareMathSymbol{\ssfPi}{0}{ssfletters}{'005}
\DeclareMathSymbol{\bsfSigma}{0}{bsfletters}{'006}
\DeclareMathSymbol{\ssfSigma}{0}{ssfletters}{'006}
\DeclareMathSymbol{\bsfUpsilon}{0}{bsfletters}{'007}
\DeclareMathSymbol{\ssfUpsilon}{0}{ssfletters}{'007}
\DeclareMathSymbol{\bsfPhi}{0}{bsfletters}{'010}
\DeclareMathSymbol{\ssfPhi}{0}{ssfletters}{'010}
\DeclareMathSymbol{\bsfPsi}{0}{bsfletters}{'011}
\DeclareMathSymbol{\ssfPsi}{0}{ssfletters}{'011}
\DeclareMathSymbol{\bsfOmega}{0}{bsfletters}{'012}
\DeclareMathSymbol{\ssfOmega}{0}{ssfletters}{'012}
\newcommand{\rvF}{{\randomvalue{F}}}
\newcommand{\rvG}{{\randomvalue{G}}}
\newcommand{\rvK}{{\randomvalue{K}}}
\newcommand{\rvX}{{\randomvalue{X}}}  	
\newcommand{\rvZ}{{\randomvalue{Z}}}	
\newcommand{\calK}{{\mathcal{K}}}
\newcommand{\calS}{{\mathcal{S}}}
\newcommand{\calX}{{\mathcal{X}}}
\newcommand{\calY}{{\mathcal{Y}}}
\newcommand{\calZ}{{\mathcal{Z}}}
\newcommand{\avgI}[1]{{{\mathbb{I}}\left(#1\right)}}           
\newcommand{\avgH}[1]{{\mathbb{H}}\left(#1\right)}              
\newtheorem*{proposition*}{Proposition}
\newtheorem{lemma}{Lemma}
\newtheorem*{remark}{Remark}
\renewcommand{\leq}{\leqslant}
\renewcommand{\geq}{\geqslant}
\pgfplotsset{scaled x ticks=false,
tick label style={font=\rmfamily}}
\DeclareSymbolFont{numbers}{T1}{ptm}{m}{n}
\DeclareMathSymbol{0}\mathalpha{numbers}{"30}
\DeclareMathSymbol{1}\mathalpha{numbers}{"31}
\DeclareMathSymbol{2}\mathalpha{numbers}{"32}
\DeclareMathSymbol{3}\mathalpha{numbers}{"33}
\DeclareMathSymbol{4}\mathalpha{numbers}{"34}
\DeclareMathSymbol{5}\mathalpha{numbers}{"35}
\DeclareMathSymbol{6}\mathalpha{numbers}{"36}
\DeclareMathSymbol{7}\mathalpha{numbers}{"37}
\DeclareMathSymbol{8}\mathalpha{numbers}{"38}
\DeclareMathSymbol{9}\mathalpha{numbers}{"39}
\begin{document}
\newboolean{double}
\setboolean{double}{true}

\title{
The Effect of Eavesdropper's Statistics in 
Experimental Wireless Secret-Key Generation}

\author{Alexandre J. Pierrot$^{\ast}$,~\IEEEmembership{Student Member,~IEEE}, R\'{e}mi A. Chou,~\IEEEmembership{Student Member,~IEEE}  and Matthieu R. Bloch,~\IEEEmembership{Member,~IEEE}
\thanks{A. J. Pierrot, R. A. Chou and M. R. Bloch are with the School~of~Electrical~and~Computer~Engineering,~Georgia~Institute~of~Technology, Atlanta,~GA~30332--0250, and GT-CNRS UMI 2958, 2 rue Marconi, 57070 Metz, France.} \thanks{E-mail: alex.pierrot@gatech.edu (\emph{corresponding}), remi.chou@gatech.edu, and matthieu.bloch@ece.gatech.edu}
\thanks{This research was supported in part by CNRS with grant PEPS PhySecNet, by NSF with grant CCF1320298, and by ANR with grant 13-BS03-0008.}
\thanks{Digital Object Identifier N/A}
}

\ifthenelse{\boolean{double}}{\markboth{Submitted to the IEEE Transactions on Information Forensics and Security}{Pierrot \MakeLowercase{\textit{et al.}}: Practical Limitations of Secret-Key Generation in Narrowband Wireless Environments}}{}

\maketitle
%
%
%


\begin{abstract}
This paper investigates the role of the eavesdropper's statistics in the implementation of a practical secret-key generation system. We carefully conduct the information-theoretic analysis of a secret-key generation system from wireless channel gains measured with software-defined radios. In particular, we show that it is inaccurate to assume that the eavesdropper gets no information because of decorrelation with distance. We also provide a bound for the achievable secret-key rate in the finite key-length regime that takes into account the presence of correlated eavesdropper's observations. We evaluate this bound with our experimental gain measurements to show that operating with a finite number of samples incurs a loss in secret-key rate on the order of 20\%.
\end{abstract}

\begin{IEEEkeywords} 
Key generation, software-defined radios, experimental, finite length, narrowband, channel diversity.
\end{IEEEkeywords}

\ifthenelse{\boolean{double}}{}{
\begin{center}
\bfseries EDICS Category: INF-SECC
\end{center}
}

\section{Introduction}
\label{sec:intro}

Wireless communications are, by nature, particularly vulnerable to eavesdropping and call for carefully designed encryption mechanisms. So far, the protection of wireless communications has mainly involved mathematical cryptographic schemes implemented at the upper layers of the communication stack. Such systems require a \emph{secret key} (either private and/or public) to encrypt the data stream, in such a way that it becomes extremely difficult to decipher the messages without the key. Mathematical cryptography relies on mathematical problems that are supposedly difficult to solve, such as prime number factorization or discrete logarithm computation. Although these techniques are widespread and well-tested, some of their current limitations, such as the need for complex key management, encourage the development of other mechanisms. 

Physical-layer security promises      ways of providing secrecy through the use of the intrinsic randomness present in any communication medium, such as noise, fading, and interferences. Physical-layer security techniques can be broadly divided into two categories. The first set of techniques treats the communication medium as a \emph{noisy channel}, with the goal of \emph{communicating} messages reliably between legitimate users and securely with respect to eavesdroppers. Such techniques are typically developed from an information-theoretic wiretap channel model~\cite{Wyner1975} and, when used in wireless channels, now often involve the introduction of artificial noise through cooperative jamming~\cite{Tekin2008,Tekin2008erratum}. The second set of techniques views the communication medium as a \emph{noisy source}, and aims at \emph{extracting} secret keys from the channel randomness. Such techniques are usually designed using an information-theoretic secret-key agreement model~\cite{Ahlswede1993,Maurer93}, which only focuses on generation of secret keys without independently of how they are subsequently used. In this paper we focus on the experimental implementation of the latter set of physical-layer security techniques. %


As summarized in Table~\ref{tab}, several works have already experimentally investigated the generation of secret keys from wireless channels. In fact, the gains of wireless channels provide a natural source of randomness, for which \emph{reciprocity} guarantees that legitimate users obtain strongly correlated channel observations, while \emph{diversity} ensures that the observations of a third-party eavesdropper disclose little information about the legitimate users' measurements. However, while these works are often motivated by an information-theoretic formulation, the information-theoretic modeling is not fully developed. In particular, we argue that often made assumptions about the eavesdropper's statistics, such as the decorrelation of channel gains at distances larger than half the wavelength, as well as the use of asymptotic values of secret-key generation, lead to over-simplifications of the protocols and over-estimations of the achievable information-theoretic secret-key rates. Consequently, while canonical theoretical models of wireless channels have proved incredibly useful to design reliable communication systems, their use for the design of secret-key generation systems requires more care. Similarly, checking that generated keys pass statistical tests~\cite{Jana2009,Zhang2010,Premnath2012,Ren2011}, which have been primarily designed for mathematical cryptography and only verify some desirable statistical property of the key, does not guarantee information-theoretic secrecy. 

\begin{savenotes}
\begin{table*}[t]
\caption{Review and comparison of existing literature.}
\renewcommand\arraystretch{1.4}
\begin{center}
\begin{tabular}{|c||c|p{2.8in}|p{2.2in}|} 
\hline
\textbf{Reference} & \textbf{Experiments}     & \textbf{Measurement statistics}  & \textbf{Security Analysis} \\ \hline \hline
\cite{Wang2011,Wang2012,Zhang2010,Ren2011} & {No} & \multirow{3}{*}{Postulated from model or non-applicable} &  \multirow{4}{*}{{Non information-theoretic} }\\\cline{2-2}
\cite{Chan2011,Wallace2009,Madiseh2008}  &{Simulations}&   &  \\ \cline{2-2}
\cite{Jana2009,Premnath2012,Patwari2010,Aono2005} & \multirow{3}{*}{\textbf{Yes}} &    &  \\ \cline{3-3}
\cite{Li2006} &  &{$I(X;Z)=0$ because of {decorrelation}} & \\ \cline{3-4}
\cite{Wallace2010,Ye2010,Imai2006b,Wilson2007} &   & \textbf{Estimated from experimental measurements}& {Asymptotic information-theoretic}  \\ \hline \hline
Present paper &  \textbf {Yes} & \textbf {Estimated from experimental measurements} & \textbf {Finite Length} \\ \hline
\end{tabular}
\end{center}
\label{tab}
\end{table*}
\end{savenotes}

The objective of this paper is to investigate the practical effect of eavesdropper's statistics by implementing a secret-key generation system from wireless channel gain with software-defined radios, and by carrying out a careful information-theoretic analysis. We emphasize that the weakness of previously reported system lies in the modeling of the source of randomness, but not in the operation of the subsequent protocol; therefore, we do not attempt to develop a complete secret-key generation system and we restrict our experimental system to the acquisition and processing of randomness for the channel. As reported in the paper, our analysis allows us to conclude that (1) assuming that the eavesdropper does not get any information because of decorrelation with distance is not exact in a real wireless system and (2)  the existence of a correlated eavesdropper's observation makes the evaluation of the finite-length secret-key rates much more intricate. To the best of our knowledge, and as summarized in Table~\ref{tab}, no previous experimental work has precisely looked into the effect of eavesdropper’s correlated measurements and their effect on achievable key rates in the finite length regime.


This paper is organized as follows. In Section~\ref{sec:goals}, we recall the basic principles, mathematical formulation, and common assumptions of secret-key generation from wireless channel gains. In Section~\ref{sec:esi}, we describe the characteristics of our experimental setup. In Section~\ref{sec:infl}, we assess the robustness of secret-key generation with respect to the diversity assumption.  In Section~\ref{sec:finlen}, we develop a theoretical achievable secret-key rate with a finite number of samples, which we evaluate with our experimental measurements. Finally, in Section~\ref{sec:concl-disc}, we present some concluding remarks and we discuss the possible limitations of the approach.

\section{Secret-Key Generation from Wireless Channel Gains}
\label{sec:goals}		

\subsection{Secret-key generation strategy}




The impulse response $h(t)$ of a wireless channel between two terminals results from the reflections and attenuations underwent by the transmitted signal along different paths. We focus on narrowband channels with approximately 1\,MHz bandwidth, for which the received signal is essentially a delayed version of the original one attenuated by a random complex gain $G(t)\exp(\jmath \phi(t))$. This complex gain accounts for the aggregated effect of attenuation and phase change of each individual path; note that reciprocity guarantees that the gain $G_{AB}(t)$ between two points A and B is the same as the gain $G_{BA}(t)$ between the points B and A. The coherence time during which the gain $G(t)$ and the phase $\phi(t)$ remain constant scales approximately as
$ T_c\approx{\lambda}/{v}$, where $\lambda$ is the wavelength and $v$ is the characteristic speed of the environment. For instance, in our experimental setup, the objects around the receiver move at about one meter per second, so that the coherence time is on the order of milliseconds for wireless communications at 2.5\,GHz.  For simplicity, we focus on the randomness of the channel gain $G(t)$, since precise measurements of the phase $\phi(t)$ require a synchronization of the terminals beyond the capability of our hardware.

\begin{figure}
\centering
\ifthenelse{\boolean{double}}{\includegraphics[width=8.8cm]{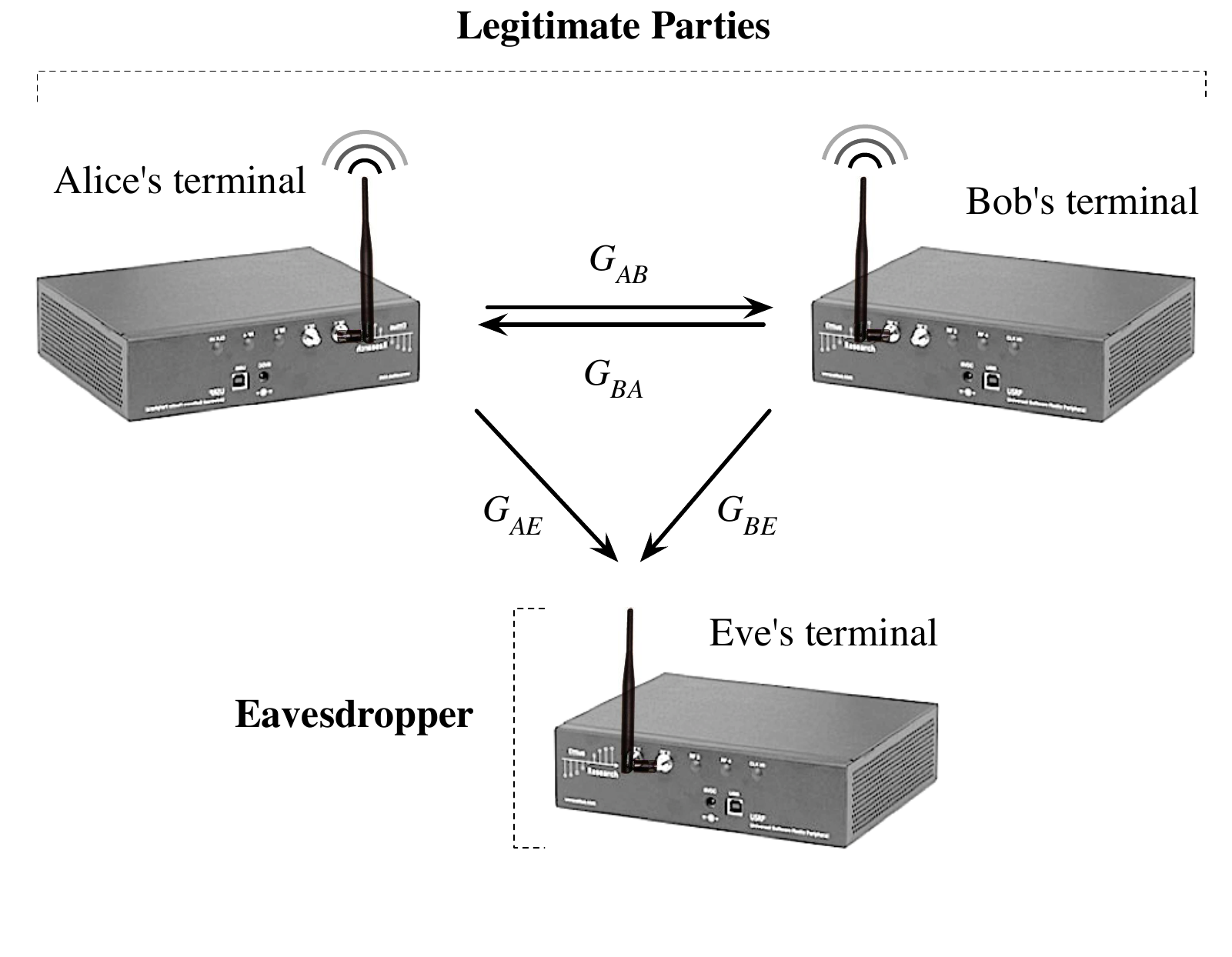}}{\includegraphics[width=10.8cm]{setup.pdf}}
\caption{Experimental measurements of wireless channel gains with software-defined radios.}
\label{fig:setup}
\end{figure} 
Following common practice, and as illustrated in Fig.~\ref{fig:setup}, we measure the channel gains between two legitimate terminals in Time-Division-Duplex mode as follows. We assume that the first terminal, Alice, sends a complex probe signal ${b}(t)$ with unit energy, whose duration $\beta$ is much smaller than the coherence time $T_c$, so that the channel gain $G$ remains almost constant over the pulse duration. The second terminal, Bob, measures from the channel a delayed and faded version $\tilde{{b}}(t)$ of ${b}(t)$. Using his knowledge of the probe signal, Bobs matches $\vec{{b}}(t)=\tilde{{b}}(t+T_{d})$  and $b(t)$  as 
\begin{equation}
T_{d}=\text{argmax}_{\tau} (\vec{{b}}\ast {b})(\tau).
\end{equation}
Then, Bob estimates the channel gain from Alice to Bob as 
\begin{equation}
G_{AB} 
=\displaystyle\int_{\beta}\left|\vec{{b}}(t)\right|^2\text{d}t,
\end{equation}
by measuring  the energy of the delayed received probe signal. Simultaneously, an eavesdropper, Eve, obtains a channel gain $G_{AE}$. Alice estimates the channel gain $G_{BA}$ in a similar fashion from Bob's probe signal, which also provides Eve with a channel gain $G_{BE}$ in the process. 





Once $n$ measurements are performed, Alice, Bob, and Eve, effectively observe the components of a noisy source $(\calX^n\calY^n\calZ^n,p_{X^nY^nZ^n})$, in which $X^n$ consists of $n$ channel gains $G_{BA}^n$, $Y^n$ consists of the channel gain $G_{AB}^n$, and $Z^n$ consists of both sequences $G_{AE}^n$ and  $G_{BE}^n$. 
A secret-key generation strategy $\calS_n$ for the source  $(\calX^n\calY^n\calZ^n,p_{X^nY^nZ^n})$ with unlimited public communications consists of the following operations:
\begin{itemize}
\item \emph{Reconciliation:} Alice transmits a public message $\rvF$ over the public authenticated channel, which allows Bob to construct an estimate $\hat{X}^n$ of $X^n$from $Y^n$ and $F$;
\item \emph{Privacy amplification:} Alice chooses a function $G$ uniformly at random in a family of universal$_2$ hash functions, which is disclosed to all parties. Alice then computes $\rvG(\rvX^n)\in \calK $ while Bob computes $G(\hat{X}^n)$. Setting $\calK\triangleq \llbracket 1,2^{nR} \rrbracket$, $R$ is called the \emph{secret-key rate}. 	
\end{itemize}    
In principle, Alice and Bob could interactively exchange messages, but we restrict the strategy to unidirectional operation. The secret-key generation strategy $\calS_n$ must ensure the following:
\begin{enumerate}
\item \emph{reliability}, measured with the probability of disagreement  $$\mathbf{P}_{\! \mathrm d}(\calS_n)\triangleq\mathbb{P}(\rvK\neq \hat \rvK|\calS_n);$$
\item \emph{(strong) secrecy}, measured by the \emph{leakage} $$\mathbf{L}(\calS_n)\triangleq \mathbb{I}(\rvK;\rvZ^n\rvF|\calS_n);$$ 
\item \emph{(strong) uniformity}, measured by 
 $$\mathbf{U}(\calS_n)\triangleq \log \left\lceil 2^{nR} \right\rceil - \mathbb{H}(\rvK|\calS_n).$$	 
\end{enumerate}
Computing the aforementioned metrics requires the knowledge of source statistics, including that of the eavesdropper.

A secret-key rate $R\triangleq\tfrac{1}{n}\log|\calK|$ is achievable if the three above metrics tend to zero as $n$ goes to infinity, and the supremum of achievable secret-key rates is called the \emph{secret-key capacity} $C_s$. Most recent works have focused on the calculation of $C_s$, which is an asymptotic limit obtained for infinitely many realizations of the source; in contrast, the analysis conducted in Section~\ref{sec:finlen} focuses on a finite length behavior that only requires  $\mathbf{L}(\calS_n)$ and $\mathbf{U}(\calS_n)$ to be small but non-zero.

\subsection{Assumptions behind the secret-key generation model}
The secrecy guaranteed by a secret-key generation strategy rely on three common assumptions. 

\paragraph{Availability of an authenticated public channel of unlimited capacity} This assumption is not unreasonable if we aim at generating low secret-key rates for which the amount of public communication is negligible compared to the channel capacity. If one explicitly introduces a  rate limitation, reconciliation with vector quantization can be used~\cite{Csiszar2000,Chou12b} without fundamentally affecting the operation of the secret-key agreement strategy.

\paragraph{Existence of enough randomness} Mobility in the environment is required to ensure that wireless channel gains have enough entropy. Mobility results from the movements of objects around the terminals or the terminals themselves; in indoor wireless environments, this channel gains experience variability as soon as people move around the communication terminals. 

 \paragraph{Knowledge of eavesdropper's statistics} In an information-theoretic secret-key generation model, one requires the knowledge of the statistical dependencies between Eve's observations and the legitimate users' to assess the secrecy of the keys. Unfortunately, there exists no indirect way to estimate these statistical dependencies of the eavesdropper without performing measurements at Eve's terminal position. In addition, as pointed out in~\cite{Jana2009}, the statistics should not be influenced by the eavesdropper to prevent the induction of artificial deterministic and predictable variations of the channel parameters. 
 
The knowledge of the eavesdropper's statistics is the most crucial assumption for the proper operation of a secret-key generation system. This could be avoided by operating in a quantum setting, e.g.~\cite{Weedbrook2012}, but to the best of our knowledge such systems are only efficiently implemented in optics. In the classical wireless setting, the assumption is often circumvented by assuming that exists enough \emph{diversity} in the environment, so that one can either assume that $\avgI{G_{AB};G_{AE}G_{BE}}=0$ meaning the eavesdropper's observations are completely independent of the legitimate users', or, at least, that $\avgI{G_{AB};G_{AE}G_{BE}}$ is upper bounded. However, we argue that this must be done with great care, and that it is crucial to precisely assess under which conditions the diversity assumption may hold, so as to define situations in which secret keys can be safely generated. Moreover, assuming that $\avgI{G_{AB};G_{AE}G_{BE}}=0$ and that the eavesdropper only observes public communication makes it considerably easier to analyze secrecy. Privacy amplification and reconciliation are simply linked using the result of Cachin and Maurer~\cite{Cachin1997}, and counting the number of bits disclosed during privacy amplification is sufficient to establish the final key length. In contrast, when $\avgI{G_{AB};G_{AE}G_{BE}}\neq 0$, the final secret key length depends on the eavesdropper's statistics and one must factor in the effect of statistical deviations from the mean when using a finite number of samples $n$. 

\section{Experimental Setup and Measurements}
\label{sec:esi}

In this section, we describe our experimental setup and our procedure to characterize the statistics of the wireless channel gains. 

\subsection{Experimental setup}

\begin{figure}
\centering
\ifthenelse{\boolean{double}}{\includegraphics[width=8.8cm]{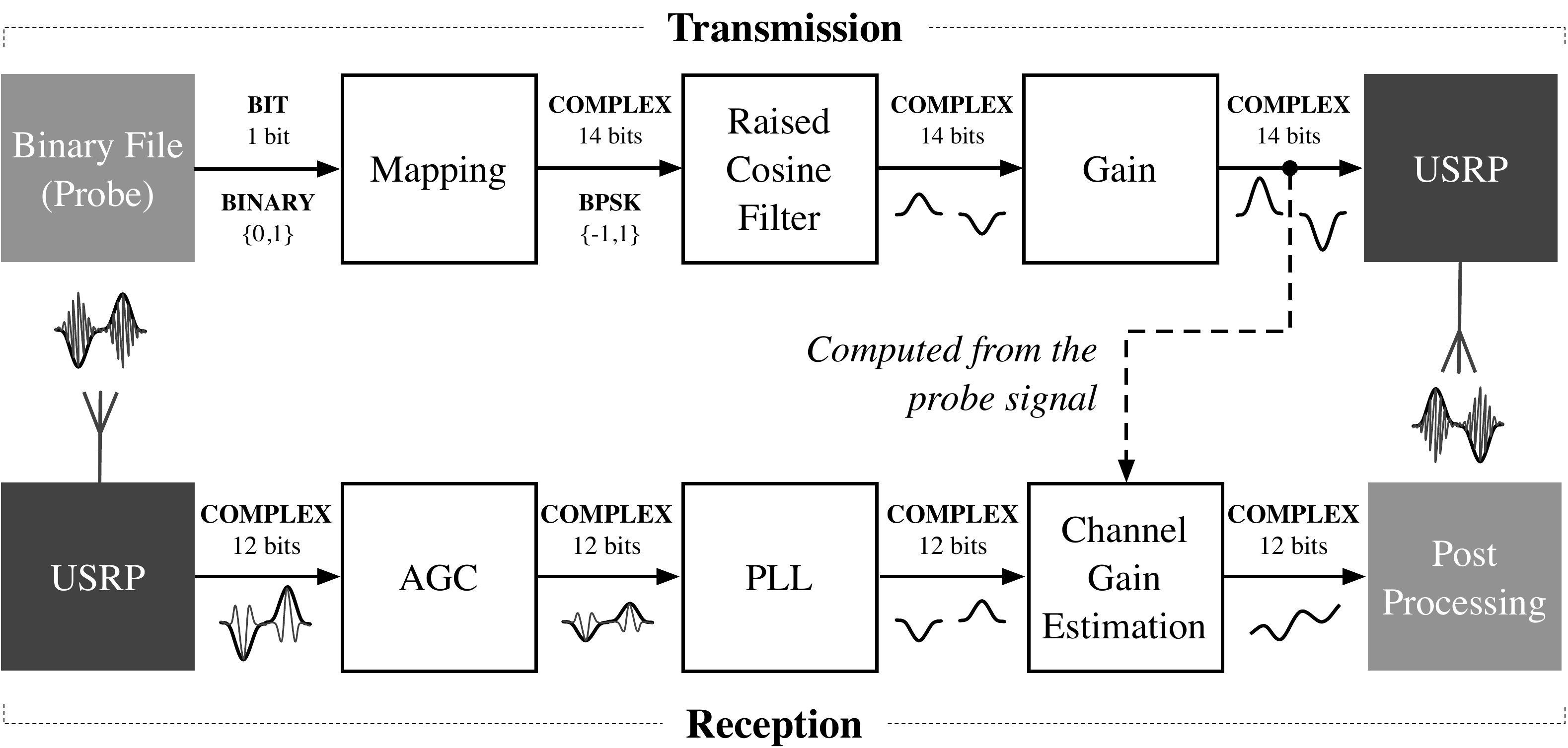}}{\includegraphics[width=13.8cm]{chain.pdf}}
\caption{Communication chain for channel gain estimation}
\label{fig:chain}
\end{figure}

\begin{figure*}
\centering
\newlength\figureheight 
    \newlength\figurewidth 
    \ifthenelse{\boolean{double}}{
    \setlength\figureheight{5cm} 
    \setlength\figurewidth{14cm}}{
    \setlength\figureheight{6cm} 
    \setlength\figurewidth{13cm}} 
    \input{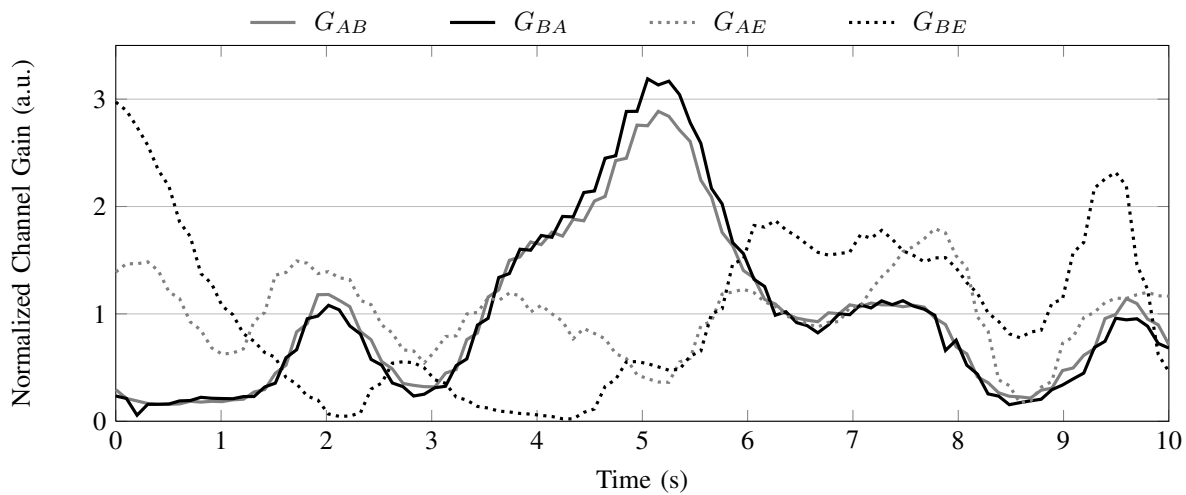}
\caption{Channel gain measurements}
\label{fig:gain}
\end{figure*}

The experiments are conducted using the first generation of USRP designed by Ettus Research\texttrademark{}. 
We use XCVR2450 daughterboards that operate in the 2.5\,GHz and 5\,GHz bands, typically used for WiFi communications. The bandwidth of the RF front-end is limited to 8\,MHz, which is well below the 20\,MHz of bandwidth required by IEEE standards. Consequently, the key rates reported in the remaining of the paper are likely to be smaller than what could be obtained on top of an actual IEEE802.11 transmission; however, this does not affect our methodology and conclusions. RF signals are transmitted using standard WiFi antennas with a transmission power below 100\,mW. The experiments are conducted in two ordinary office rooms representative of an indoor environment: one is our wireless communication laboratory and the other is a conference room. The choice of which room we used in our various experiments was only motivated by convenience. 

For convenience, all software-defined radios have the same configuration, both in hardware and software, and all are connected to a single computer that processes the transmitted and received data streams. We allow ourselves this simplification because our focus is only to study the effect of eavesdropper's statistics; while this is not exactly how a true secret-key agreement scheme would operate, we note that synchronization could be achieved in a distributed manner thanks to the pilot symbols used for channel gain measurements. The first samples of every data stream are used to calibrate the measurements and compensate hardware discrepancies by scaling all measurements to obtain the same average received energy; after calibration, the scaling is kept constant through each experiment, as we did not observe any significant drift during acquisition.

We conducted all experiments using the three-user setup represented in Fig.~\ref{fig:setup}. We used a modulation frequency of  2.484\,GHz, which corresponds to an unused WLAN channel to avoid interference with other WiFi channels. As the vast majority of communication systems~\cite{Bharadia2013}, the XCVR2450 daughterboards are limited to half-duplex operation and we were unable to simultaneously measure all the channel gains. We circumvented the problem by continually commuting the radios between the Rx and Tx modes but, because of further hardware restrictions, we could not reliably use commutation times shorter than 80\,ms. Consequently, we moved objects in the environment at less than 1\,$\text{m}.\text{s}^{\text{-1}}$ so that the channel gain would not vary much between an Rx/Tx switch, hence maintaining channel reciprocity. If the hardware allowed faster commutation to capture  faster fades, higher  secret-key generation rates would be achieved in a high mobility environment, but the security analysis would remain essentially the same.

The estimation of the channel gains is performed using a probe message sent through the communication chain described in Fig.~\ref{fig:chain}. The gain present in the transmission chain allows power control and is kept constant throughout the entire duration of the experiment. 
During reception, the USRP performs demodulation and analog-to-digital conversion. An AGC (automatic gain controller) scales the received signal to match the optimal range of the subsequent processing block. Note that it is tuned to be slow enough not to remove the gain variations over the timescale of interest. Because the system operates at a high carrier frequency, we use a phase-locked loop (PLL) to suppress any residual modulation resulting from minor differences between modulation and demodulation frequencies. The demodulated signal is then used to compute the transmission gain. Note that the probe signal is known to all parties and that the transmission chain behavior is entirely deterministic, so that all users also know the shaped signal and can compute the channel gain. The probe signal, which is a fixed randomly-generated sequence, is also used to synchronize the different radios in software.

\subsection{Characterization of channel gain statistics}
\label{ssec:char}

We now describe our methodology to characterize the statistics of the wireless channel gains for secret-key generation. The results we report next have been obtained for a fixed configuration of the terminals similar to that illustrated in Fig.~\ref{fig:setup}, in which Alice and Bob's terminals were separated by 1.5m and Eve's terminal was approximately 1m away from both Alice and Bob, and based on 500 gain measurement experiments, each lasting approximately ten seconds. Fig.~\ref{fig:gain} shows a snapshot of the evolution of the various channels gains between Alice, Bob, and Eve. As could have been expected from reciprocity, $G_{{AB}}$ closely follows $G_{{BA}}$. Eve's channels gains $G_{{AE}}$ and $G_{{BE}}$ are seemingly unrelated to the channel gains  $G_{{AB}}$ and $G_{{BA}}$, potentially confirming the existence of enough channel diversity. According to Jake's model~\cite{}, diversity should hold as soon as Eve is farther from Alice and Bob than the coherence distance, which is $\ell_{c}=\lambda/2\approx$ 6 cm, at 2.484\,GHz; in the next section, we perform a more careful diversity analysis and show that this us unfortunately not accurate enough for secret-key generation.

\begin{figure}
\centering 
    \ifthenelse{\boolean{double}}{
    \setlength\figureheight{5cm} 
    \setlength\figurewidth{7cm}}{
    \setlength\figureheight{6cm} 
    \setlength\figurewidth{9cm}} 
%
%
\begin{tikzpicture}

\begin{axis}[%
width=\figurewidth,
height=\figureheight,
scale only axis,
xmin=0,
xmax=100,
xlabel={$\text{Sample delay }\nu$},
xmajorgrids,
xtick={0,20,40,60,80,100},
ymin=0,
ymax=1.6,
ylabel={$\mathbb{I}(X_0;X_{\nu})$ (bits)},
ytick={0, 0.2, 0.4, 0.6, 0.8, 1.0, 1.2, 1.4, 1.6},
ymajorgrids
]
\addplot [
color=black,
very thick,
solid,
forget plot
]
table[row sep=crcr]{
1 1.63172881321691\\
2 1.0827348572905\\
3 0.822369537758234\\
4 0.626317374916059\\
5 0.478748446615159\\
6 0.359735048237404\\
7 0.266894301549475\\
8 0.195733992897958\\
9 0.141427558816478\\
10 0.1001798439569\\
11 0.0692335375378717\\
12 0.0465862819388832\\
13 0.0306955307016419\\
14 0.0194355193065907\\
15 0.0117040023310986\\
16 0.0067538068478933\\
17 0.00409410903744867\\
18 0.00344721191200222\\
19 0.00433592042407289\\
20 0.0067179681266338\\
21 0.0101815777942977\\
22 0.0148266531856146\\
23 0.0207312784996823\\
24 0.0274958768560659\\
25 0.0349161450317144\\
26 0.0432604337867766\\
27 0.0531658618377308\\
28 0.0633411063991534\\
29 0.0738757437484679\\
30 0.0837341243239837\\
31 0.0939275697411267\\
32 0.102293880049095\\
33 0.110035831633928\\
34 0.115462677830666\\
35 0.119549058254989\\
36 0.119420172167236\\
37 0.118475549595466\\
38 0.11472513182893\\
39 0.110089407846054\\
40 0.103413706272344\\
41 0.0970590515987108\\
42 0.0891959553943001\\
43 0.0822306600930748\\
44 0.0737211117178161\\
45 0.0663368237994982\\
46 0.059194977681853\\
47 0.0523747150913003\\
48 0.0453421760731022\\
49 0.0395463063251503\\
50 0.0362314806007969\\
51 0.0331296800102452\\
52 0.0305741897168852\\
53 0.0291627701876827\\
54 0.0281094456519083\\
55 0.0274718209250046\\
56 0.0267417738114142\\
57 0.0254916800935418\\
58 0.0241026883796282\\
59 0.0233269677820602\\
60 0.0223476022663265\\
61 0.0215193233079907\\
62 0.020099249225061\\
63 0.018971702569262\\
64 0.0168504135491057\\
65 0.0154946614886491\\
66 0.0134572190026545\\
67 0.0119842895013129\\
68 0.0108574159744547\\
69 0.0110476077406612\\
70 0.0123433080462884\\
71 0.0149675999126169\\
72 0.0175649112060732\\
73 0.0214023119927569\\
74 0.0248795996868594\\
75 0.0282714245502649\\
76 0.0303150971072419\\
77 0.031867266044749\\
78 0.0306425429249995\\
79 0.0287934158949503\\
80 0.0247367105999181\\
81 0.0216205536798515\\
82 0.0195255565862384\\
83 0.0181523207280532\\
84 0.0171274872709694\\
85 0.0171774480405749\\
86 0.0186055602546669\\
87 0.0194636533448692\\
88 0.019992411376751\\
89 0.0213963178194406\\
90 0.022073358246323\\
91 0.0225495841878028\\
92 0.0240017775857935\\
93 0.0257085308942519\\
94 0.026789616230789\\
95 0.0291403621505835\\
96 0.029919451981617\\
97 0.0305615372026933\\
98 0.0311532347338444\\
99 0.0378852977002976\\
100 0.0447675657876099\\
};
\end{axis}
\end{tikzpicture}%
\caption{Evolution of the statistical dependence between channel gains}
\label{fig:corr}
\end{figure}
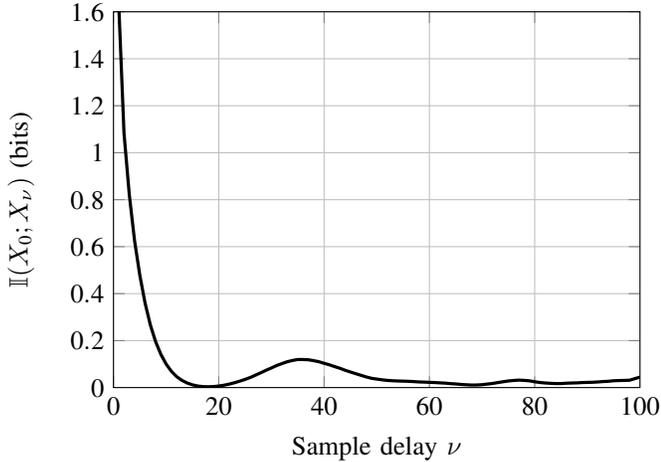

To make the statistical characterization tractable, it is desirable to operate on a \emph{memoryless} source of randomness for which two consecutive observations are independent. We thus need to downsample the raw measurements in Fig.~\ref{fig:gain} to remove the time correlation and only keep a single sample per coherence interval; since the coherence time is on the order of magnitude of  $ {\lambda}/{v}$, we expect $T_c$ to be on the order of one second. We obtain a more precise characterization of the value of $T_c$ with an estimation of the mutual information $\mathbb{I}(\rvX_0;\rvX_{\nu})$ between a sample $\rvX_0$ and the \mbox{$\nu$-th} next sample $\rvX_{\nu}$, obtained by viewing each of our experimental time series as the realization of the same ergodic random process. The lower $\mathbb{I}(\rvX_0;\rvX_{\nu})$ is, the less dependent the samples are. To use more samples for the estimation, we also assumed that the gains were wide sense stationary, which we confirmed by verifying that the quantity $\mathbb{I}(\rvX_0;\rvX_{\nu})$ remained the same for different choices of $\rvX_0$. Unless mentioned otherwise, all information metrics are estimated with the technique presented in~\cite{Peng05}. As shown in Fig.~\ref{fig:corr}, the mutual information $\mathbb{I}(\rvX_0;\rvX_{\nu})$ decays rapidly and vanishes after a dozen samples, corresponding to approximately one second, as expected. We note that operating on the down-sampled measurements instead of the original measurements would result in a lower achievable secret-key rate, which might seem an unnecessary simplification since we know how to characterize achievable secret-key rates for sources with memory~\cite{Chan2011,Bloch2011e}. However, without an accurate parametric model, the estimation of the statistics of a source with memory turns out to be a much more difficult problem.


The final step is then to estimate the joint statistics $p_{XYZ}$ of the memoryless source, which poses two challenges. First, one would in principle need to analyze the estimation error and include it in the subsequent calculation of achievable key rates; in this paper, we do not take this into account and assume that our estimation is accurate enough to be used as the true joint statistics. Second, our measurements only provide us access to \emph{quantized} measurements $X_Q$, $Y_Q$ and $Z_{Q}$ of the true channel gains $X$, $Y$, and $Z$, respectively. The quantization of $X$ and $Y$ is not critical, since the effect is merely to reduce achievable secret-key rates and to affect the reconciliation step. However, the quantization of $Z$ results in an underestimation of the eavesdropper's knowledge, and we need to assume that the eavesdropper is able to keep $Z$ continuous. Consequently, we need to estimate the joint statistics $p_{X_QY_QZ}$. Since we cannot acquire a continuous-valued $Z$ with the software-defined radios, we first construct a histogram corresponding to $p_{{X}_Q{Y}_Q{Z}_{Q'}}$ from the measured data, where $Z_{Q'}$ is a quantized version of $Z$. To obtain $p_{X_QY_QZ}$, we then perform an interpolation of the histogram with respect to $Z_Q'$. The raw data was acquired with a 14 bits resolution, which we further quantized to obtain a $4$-bit resolution for $X_Q$ and $Y_Q$, and a $6.5$-bit resolution for $Z_{Q'}$. The estimation process could be further refined, but is left for future research. 

\section{Robustness of the Diversity Assumption}
\label{sec:infl}
%




To verify to what extent the diversity assumption holds in a narrowband wireless setting, we conducted a series of measurements in our building conference room. The room is about 40 squared meters and is furnished as shown in Fig.~\ref{fig:nomotion}. Experiments were conducted off-hours to avoid any unwanted motion outside of the room. Two software-defined radios were placed in the middle of the room on the conference table, two meters apart.  We  used a third radio to represent the eavesdropper, which was then moved in the room across 60 possible positions. We measured the channel gains $G_{AB}$ and $G_{AE}$ obtained  by Bob and Eve to evaluate $\avgI{G_{AB},G_{BE}}$. These experiments only involved one way communications (Alice-to-Bob and Alice-to-Eve), thus avoiding the problem of half-duplex operation and allowing us to gather data at a faster pace. Each experiment lasted one minute, during which we acquired 50,000 channels gain values at a 1\,kHz sampling rate. 

The results of the measurement campaign are presented in Figures~\ref{fig:nomotion} and~\ref{fig:withmotion}. Eve is placed across the positions indicated by the black "+" marks, which correspond to a coarse square grid of one meter and additional positions to cover interesting spots and the room borders.  The brightness represents the \emph{normalized secrecy-rate} (supposing $G_{AB}=G_{BA}$) between the gains obtained by Eve and those obtained by Bob, which is computed as 
\begin{align}
\text{Normalized secrecy-rate} &\triangleq \dfrac{\avgI{G_{AB};G_{BA}}-\avgI{G_{AB};G_{AE}}}{\avgI{G_{AB};G_{BA}}} \notag\\
&=1-\dfrac{\avgI{G_{AB};G_{AE}}}{\avgH{G_{AB}}},
\end{align} 
where $G_{AB}$ and $G_{AE}$ are  the channel gains measured by Bob and Eve, respectively. We introduce this normalization to compensate the entropy variations of the wireless channel gains  across different experiments. This quantity is close to one (white) when the gains are independent, and equal to zero (black) when there is a one-to one mapping between $G_{AB}$ and $G_{AE}$.

\begin{figure}
\centering
    \ifthenelse{\boolean{double}}{
\includegraphics[width=8.5cm]{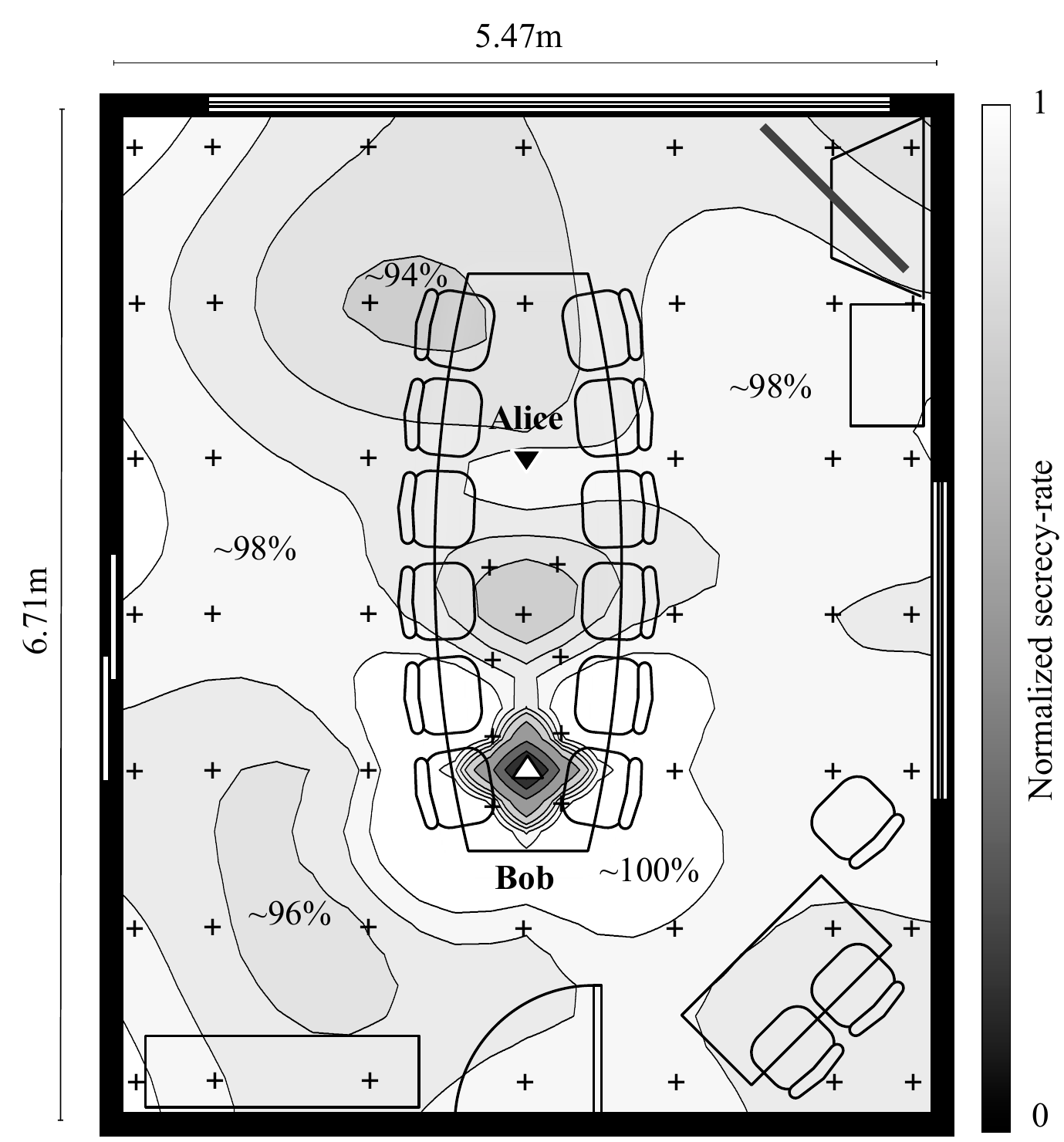}}{
\includegraphics[width=11.5cm]{nomotion.pdf}
}
\caption{Measurements of the normalized secrecy-rate  without motion}
\label{fig:nomotion}
\end{figure}

\begin{figure}
\centering
    \ifthenelse{\boolean{double}}{
\includegraphics[width=8.5cm]{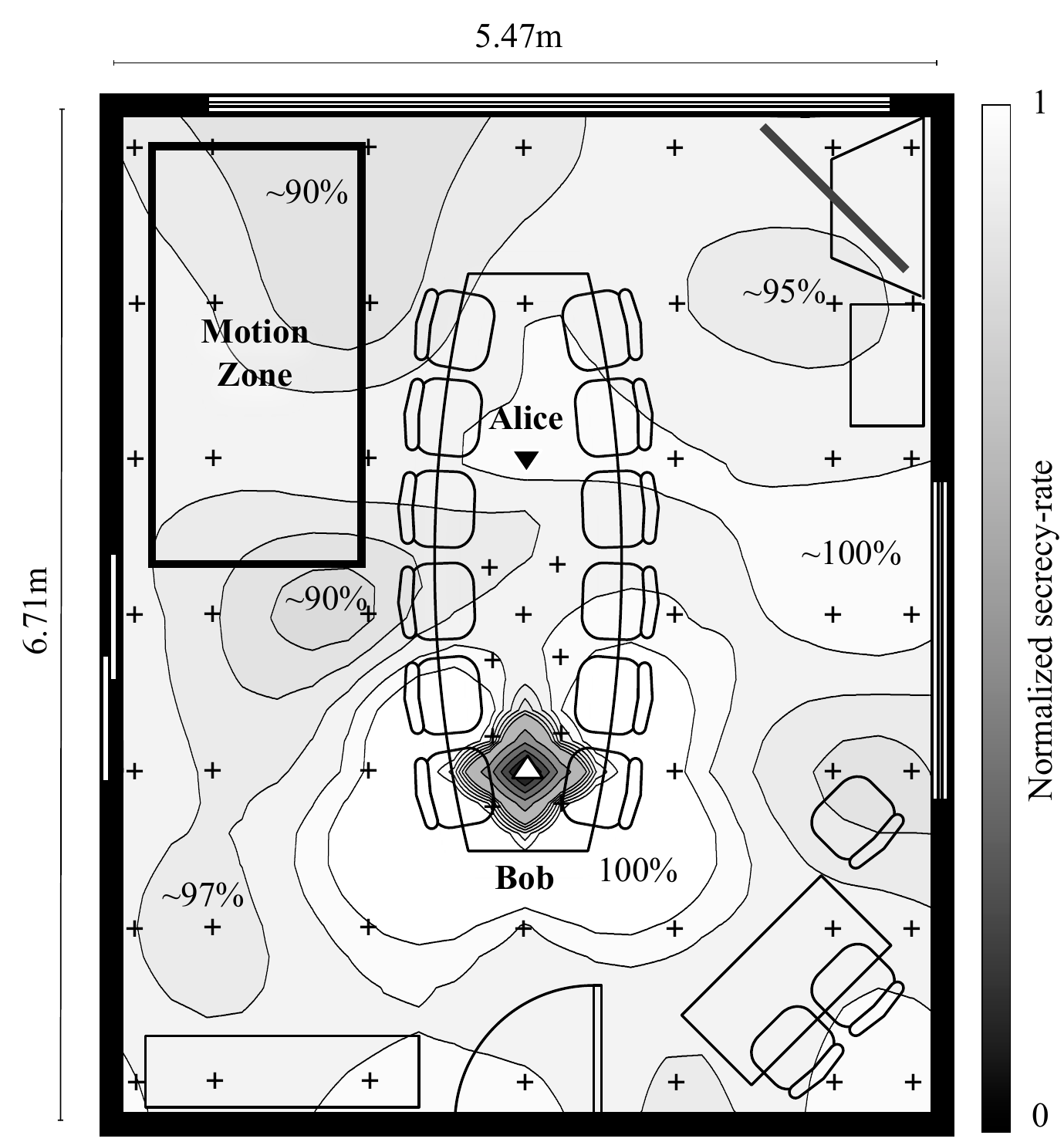}}{
\includegraphics[width=11.5cm]{withmotion.pdf}
}
\caption{Measurements of the normalized secrecy-rate  with motion}
\label{fig:withmotion}
\end{figure}

The first series of measurement shown in Fig.~\ref{fig:nomotion} is conducted without movement and serves as a benchmark. In this situation, there is no fluctuation of the channel gains, except those induced by the noise at the receivers' terminals. Therefore, the quantity $\avgI{G_{AB};G_{AE}}$ is small since the receiver noise is independent from one radio to another. When Eve and Alice use the same antenna, we obtain a darker spot since we create an electronic coupling between the terminals. 

From a secret-key generation standpoint, we need to introduce motion to induce variations of the channel gains. In a second series of measurement shown in Fig.~\ref{fig:withmotion}, the operator is walking in the upper left corner of the room. We observed high correlations when Bob and Eve's antennas are huddled together, and a fast decay of this correlation with distance, with leaked information reaching almost zero after a few centimeters. However, we observe that the leaked information increased again further away, even reaching values as high as 10\% in the upper left corner. Since this corner was actually the place where the operator was moving during the experiment, it suggests that measurements close to the motion source provide a better insight into the legitimate channel fluctuations. Therefore, defining a simple zone of guaranteed secrecy for key generation is not straightforward. From a security standpoint this clearly shows that we cannot ignore the information leaked to the eavesdropper when channel variations come from the motion in the environment. It shows that a secret key generation \emph{must} include a stage of privacy amplification to deal with unforeseeable levels of leaked information.

\section{Experimental Secret-Key Generation in the Finite Block-Length Regime}
\label{sec:finlen}


%
Once the statistics $p_{X_QY_QZ}$ of the source are characterized, one can easily calculate asymptotic achievable secret-key rates $\avgI{X_Q;Y_Q}-\avgI{X_Q;Z}$. However, these rates may be far too optimistic when operating on a finite number of samples, and it is crucial to avoid overestimating the number of secret bits that one can effectively extract with reconciliation and privacy amplification. Our analysis in Section~\ref{sec:finite-length-analys} is based on the detailed study of privacy amplification with continuos eavesdropper's observation, which differs from the finite-length analysis in~\cite{Watanabe2013,Tyagi2014arXiv} restricted to discrete observations. Our numerical results in Section~\ref{sec:numerical-results} are also obtained for the memoryless source $p_{X_QY_QZ}$ characterized experimentally in Section~\ref{sec:esi}, and not from computer simulations. We emphasize that the only approximation in our analysis is that the source statistics estimated in Section~\ref{ssec:char} correspond to the true statistics; the entire analysis in Section~\ref{sec:finite-length-analys} is exact.

\subsection{Finite-length analysis of privacy amplification for a continuous observation $Z$}
\label{sec:finite-length-analys}

We now analyze a sequential strategy~\cite{Maurer00,Bloch11}, in which the reconciliation step is performed with error correction codes, such as~\cite{Bloch06}, and the privacy amplification step is performed with universal$_2$ hash functions~\cite{Bennett95}. The major technical challenge is that $Z$ is continuous,  so that it is not possible to directly use previous approaches~\cite{Maurer00,Bennett95}, which are only valid for discrete random variables. In principle, we could quantize $Z$, since by \cite{Pinsker64}\cite{Fano61}\cite[Section 8.5]{Cover91}\cite[Lemma 2]{Barros08}\cite[Lemma 1.2]{Chou12b}, for any $\delta >0$, if a quantized version $Z_{Q'}$ of $Z$ is fine enough, we have
\begin{equation*}
|\mathbb{I}(K;AZ) -\mathbb{I}(K;FZ_{Q'})| < \delta.
\end{equation*}
Unfortunately, this result is only useful asymptotically since we do not know how to relate a fixed quantized version $Z_{Q'}$ to $\delta$. We circumvent the problem using the notion of \emph{continuous strong typicality}~\cite{Koetter11}, and we refer the reader to Appendix~\ref{sec:conttyp} for all notation and definitions related to continuous strong typicality. 

We assume that reconciliation is performed on $X_Q$ and $Y_Q$, and we define $\mathbf{P}_e^{\textup{rec}}$ as the probability of error of the reconciliation step, and $l_{\textup{rec}}$ as the number of information bits leaked during the process. To determine the final secret-key length obtained after privacy amplification with a universal$_2$ hash function $G$ chosen at random, we lower bound $\mathbb{H}(K |G Z^{n}F)$, which represents the uncertainty on the key the eavesdropper gets with its own observation $Z^n$,  the knowledge of $G$, and the public message $F$. We first lower bound the R\'{e}́nyi entropy $\mathbb{H}_{c}(X_Q^n|Z^n=z^n,F=f_{rec})$~\cite{Bennett95} by a term that can be numerically evaluated, and then apply~\cite[Theorem 4.4]{Bloch11}. Our analysis relies on two technical lemmas. Lemma~\ref{lem1} shows that the R\'{e}́nyi entropy $\mathbb{H}_{c}(X_Q^n|Z^n=z^n)$ is lower-bounded by the Shannon entropy $n\mathbb{H}(X_Q|Z)$ and a carefully characterized term that vanishes as $n$ goes to infinity. Lemma~\ref{lem2} relates $\mathbb{H}_{c}(X_Q^n|Z^n=z^n)$ to $\mathbb{H}_{c}(X_Q^n|Z^n=z^n,F=f_{rec})$.

\begin{lemma}[Adapted from \cite{Maurer00}\cite{Bloch11}] \label{lem1}

Consider a memoryless source $(\mathcal{X}_Q\mathcal{Z},p_{XZ})$ with $\mathcal{X}_Q$ a discrete alphabet and $\mathcal{Z}$ a continuous alphabet. Consider 
$\mathcal{A}_{\epsilon,\epsilon'}^{(n)}(Z,X_Q)$, as defined in Appendix~\ref{sec:conttyp}, where $\epsilon'$ is fixed and chosen such that
\begin{equation} \label{epscond}
  \exists n_0 \in \mathbb{N}, \forall n> n_0,  \sum_{i=1}^2 \left( e^{ n C_{Z}^{(i)}(\epsilon') } + e^{ n C_{ZX_Q}^{(i)}(\epsilon') } \right) \leq 2^{-b_1 n \epsilon}.
\end{equation}

Define 
$$\Theta \triangleq \mathds{1}\left\{(Z^n,X_Q^n) \in  \mathcal{A}_{\epsilon,\epsilon'}^{(n)}(Z,X_Q) \text{ and } Z^n \in \mathcal{B}_{\epsilon,\epsilon'}^{(n)}(Z,X_Q)\right\},   $$
where  
\begin{multline*}
\mathcal{B}_{\epsilon,\epsilon'}^{(n)}(Z,X_Q) \triangleq  \\ \left\{ z^n: \mathbb{P} \left[ \left( \mathcal{A}_{\epsilon,\epsilon'}^{(n)} (Z,X_Q) \right)^{\mathsf c} \Big| Z^n =z^n  \right] \leq 2^{-n b_0 \epsilon} \right\}.
\end{multline*}
 Then, for $n>n_0$, 
$$\mathbb{P}[\Theta=1]\geq 1-\delta_{\epsilon}^{(1)}(n),$$
 where $\delta_{\epsilon}^{(1)}(n) \triangleq  \sum_{i=1}^2 e^{nC_Z^{(i)}(\epsilon)} + 2^{-nb_1\epsilon} + 2^{-n(b_1-b_0)\epsilon}$. Moreover, for $z^n \in \mathcal{B}_{\epsilon,\epsilon'}^{(n)}(Z,X_Q)$,
\begin{align*}
 \mathbb{H}_{c}(X_Q^n|Z^n=z^n,\Theta=1)  \geq \ifthenelse{\boolean{double}}{\\}{} n(\mathbb{H}(X_Q|Z) -\epsilon - \epsilon') + \log_2 ( 1- \delta_{\epsilon}^{(2)}(n)),
\end{align*}
with $\delta_{\epsilon}^{(2)}(n) \triangleq  2^{-nb_0\epsilon}$.
\end{lemma}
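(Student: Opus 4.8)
The plan is to establish the two assertions of the lemma — the lower bound on $\mathbb{P}[\Theta=1]$ and the lower bound on the conditional R\'enyi entropy $\mathbb{H}_c(X_Q^n|Z^n=z^n,\Theta=1)$ — essentially by unwinding the definition of continuous strong typicality and applying the same AEP-style machinery that underlies \cite{Maurer00,Bloch11}, but keeping track of the continuous variable $Z$ via the auxiliary sets $\mathcal{A}_{\epsilon,\epsilon'}^{(n)}$ and $\mathcal{B}_{\epsilon,\epsilon'}^{(n)}$ from Appendix~\ref{sec:conttyp}.

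First I would bound $\mathbb{P}[\Theta=0]$. By a union bound, $\mathbb{P}[\Theta=0] \leq \mathbb{P}[(Z^n,X_Q^n)\notin \mathcal{A}_{\epsilon,\epsilon'}^{(n)}] + \mathbb{P}[Z^n\notin \mathcal{B}_{\epsilon,\epsilon'}^{(n)}]$. The first term is controlled directly by the concentration property of continuous strong typical sets, which should give a bound of the form $\sum_{i=1}^2 e^{nC_Z^{(i)}(\epsilon)}$ plus the atypicality probability of the pair; here the hypothesis \eqref{epscond} on $\epsilon'$ is what guarantees the relevant exponents are summable and produces the $2^{-b_1 n\epsilon}$ contribution. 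For the second term, I would use Markov's inequality on the conditional atypicality probability: since $\mathbb{E}\big[\mathbb{P}[(\mathcal{A}_{\epsilon,\epsilon'}^{(n)})^{\mathsf c}\,|\,Z^n]\big] = \mathbb{P}[(\mathcal{A}_{\epsilon,\epsilon'}^{(n)})^{\mathsf c}]$ is exponentially small (again by \eqref{epscond}), the probability that this conditional quantity exceeds the threshold $2^{-nb_0\epsilon}$ is at most $2^{-n(b_1-b_0)\epsilon}$. Summing the pieces yields exactly $\delta_\epsilon^{(1)}(n)$.

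Next I would prove the entropy bound. Fix $z^n \in \mathcal{B}_{\epsilon,\epsilon'}^{(n)}$. On the event $\Theta=1$ we have $(z^n,X_Q^n)\in \mathcal{A}_{\epsilon,\epsilon'}^{(n)}$, and the defining inequalities of continuous strong typicality give a pointwise upper bound on the conditional probability $p_{X_Q^n|Z^n}(x_Q^n|z^n)$ of the shape $2^{-n(\mathbb{H}(X_Q|Z)-\epsilon-\epsilon')}$ (the $\epsilon$ coming from the typicality slack and the $\epsilon'$ from the continuous-alphabet correction in the definition of $\mathcal{A}$). Then $\mathbb{H}_c(X_Q^n|Z^n=z^n,\Theta=1) = -\log_2 \sum_{x_Q^n} p_{X_Q^n|Z^n,\Theta}(x_Q^n|z^n,1)^2$, and since conditioning on $\Theta=1$ only renormalizes by $\mathbb{P}[\Theta=1|Z^n=z^n] \geq 1-2^{-nb_0\epsilon} = 1-\delta_\epsilon^{(2)}(n)$ (this is precisely where $z^n\in\mathcal{B}_{\epsilon,\epsilon'}^{(n)}$ is used), I would bound $p_{X_Q^n|Z^n,\Theta}(x_Q^n|z^n,1) \leq p_{X_Q^n|Z^n}(x_Q^n|z^n)/(1-\delta_\epsilon^{(2)}(n))$ on the typical set and $=0$ off it. Plugging the pointwise bound into one factor of the square and summing the other factor to at most $1$ gives $\sum_{x_Q^n}(\cdots)^2 \leq 2^{-n(\mathbb{H}(X_Q|Z)-\epsilon-\epsilon')}/(1-\delta_\epsilon^{(2)}(n))$, and taking $-\log_2$ yields the claimed inequality.

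The main obstacle is getting the continuous-alphabet bookkeeping exactly right: the definition of $\mathcal{A}_{\epsilon,\epsilon'}^{(n)}(Z,X_Q)$ in Appendix~\ref{sec:conttyp} replaces entropies by differential-entropy-like quantities and introduces the quantization-scale parameter $\epsilon'$ together with the exponents $C_Z^{(i)}(\epsilon)$, $C_{ZX_Q}^{(i)}(\epsilon)$, and the constants $b_0,b_1$; I need to make sure the pointwise probability bound that falls out of membership in $\mathcal{A}$ really has the clean form $2^{-n(\mathbb{H}(X_Q|Z)-\epsilon-\epsilon')}$ with a genuine Shannon conditional entropy $\mathbb{H}(X_Q|Z)$ (legitimate here because $X_Q$ is discrete), and that the condition \eqref{epscond} is strong enough to absorb all the error exponents into the stated $\delta_\epsilon^{(1)}(n)$. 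The probabilistic skeleton — union bound plus Markov plus the standard typicality squaring argument — is routine; the care is entirely in matching the continuous-typicality constants from the appendix.
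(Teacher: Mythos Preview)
Your proposal is correct and follows essentially the same route as the paper: the paper invokes Lemma~\ref{lem_cont} (parts 2 and 3) for the probability bound and then, for the entropy bound, uses Bayes' rule together with Lemma~\ref{lem_cont}.1 to control $\mathbb{P}(X_Q^n=x_Q^n\mid Z^n=z^n,\Theta=1)$ pointwise, finishing via $\mathbb{H}_c\geq\mathbb{H}_\infty$ --- which is equivalent to your ``bound one factor, sum the other to $1$'' computation on the collision sum.

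One small point worth tightening: the straight Markov step you describe for $\mathbb{P}[Z^n\notin\mathcal{B}]$ uses $\mathbb{E}\big[\mathbb{P}[(\mathcal{A}_{\epsilon,\epsilon'}^{(n)})^{\mathsf c}\mid Z^n]\big]=\mathbb{P}[(\mathcal{A}_{\epsilon,\epsilon'}^{(n)})^{\mathsf c}]$, but the latter is bounded by $\sum_i e^{nC_Z^{(i)}(\epsilon)}+2^{-nb_1\epsilon}$, not just $2^{-nb_1\epsilon}$, so naive Markov yields an extra $2^{nb_0\epsilon}\sum_i e^{nC_Z^{(i)}(\epsilon)}$ term. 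To land exactly on $\delta_\epsilon^{(1)}(n)$ the paper (in the proof of Lemma~\ref{lem_cont}.3) first restricts to $z^n$ with $f(z^n)\leq\epsilon$, so that the conditional atypicality given $Z^n=z^n$ involves only the $\epsilon'$-events, whose unconditional probability is controlled directly by~\eqref{epscond}; Markov then gives precisely $2^{-n(b_1-b_0)\epsilon}$. This is exactly the ``matching the constants'' issue you already anticipate.
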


\begin{proof}
See Appendix \ref{proof_lem1}
\end{proof}

\begin{lemma}[\!\cite{Maurer00,Bloch11}] \label{lem2}
Let $S \in \mathcal{S}$ and $U \in \mathcal{U}$ be two discrete random variables with joint distribution $p_{SU}$. Let $f \in \mathbb{R}^{\mathbb{N}}$. Define the function  $\Upsilon : \mathcal{U} \mapsto \left\{ 0,1\right\} $ as $$
\Upsilon(u)  \triangleq \mathds{1}\left\{ \mathbb{H}_{c}(S) -  \mathbb{H}_{c}(S|u) \leq \log|\mathcal{U}|+ 2 f(n) + 2\right\}.$$
Then, $\mathbb{P}_U[\Upsilon(U)=1]\geq1-2^{-f(n)}$.
\end{lemma}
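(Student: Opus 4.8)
The plan is to translate the event $\{\Upsilon(U)=1\}$ into a statement about collision probabilities and then control the complementary event with a truncated Markov inequality. Write $P_c(S)\triangleq 2^{-\mathbb{H}_{c}(S)}=\sum_{s}p_S(s)^2$ and, for each $u\in\mathcal U$, $T(u)\triangleq 2^{-\mathbb{H}_{c}(S|u)}=\sum_{s}p_{S|U}(s|u)^2$. Then the inequality $\mathbb{H}_{c}(S)-\mathbb{H}_{c}(S|u)\leq\log|\mathcal U|+2f(n)+2$ holds if and only if $T(u)\leq t$, where $t\triangleq|\mathcal U|\,2^{2f(n)+2}\,P_c(S)$, so it suffices to prove $\mathbb{P}_U[T(U)>t]<2^{-f(n)}$.

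I would then fix the threshold $\lambda\triangleq 2^{-f(n)}/(2|\mathcal U|)$ and split $\mathcal U$ according to whether $p_U(u)\geq\lambda$. For the light atoms, a crude union bound gives $\mathbb{P}_U[p_U(U)<\lambda]\leq|\mathcal U|\lambda=2^{-f(n)-1}$. For the heavy atoms, using $p_{S|U}(s|u)=p_{SU}(s,u)/p_U(u)$ and then dropping the restriction,
\[
\mathbb{E}_U\!\left[T(U)\,\mathds{1}\{p_U(U)\geq\lambda\}\right]\leq\frac{1}{\lambda}\sum_{s,u}p_{SU}(s,u)^2=\frac{P_c(SU)}{\lambda}\leq\frac{P_c(S)}{\lambda},
\]
where the last step is the elementary fact that marginalization does not increase collision probability, $P_c(SU)=\sum_{s,u}p_{SU}(s,u)^2\leq\sum_s\bigl(\sum_u p_{SU}(s,u)\bigr)^2=P_c(S)$ (the discarded cross terms are non-negative). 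Markov's inequality then gives $\mathbb{P}_U[T(U)>t,\,p_U(U)\geq\lambda]<P_c(S)/(\lambda t)$, and substituting $t$ and $\lambda$ yields $P_c(S)/(\lambda t)=2^{-f(n)-1}$. Adding the two contributions gives $\mathbb{P}_U[T(U)>t]<2^{-f(n)-1}+2^{-f(n)-1}=2^{-f(n)}$, which is the claim.

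There is no serious obstacle here; the only things to watch are the constant bookkeeping -- the factor $\tfrac12$ inside $\lambda$ is exactly what produces the additive $+2$ in $2f(n)+2$, and $t$ must scale like $|\mathcal U|\,2^{2f(n)}P_c(S)$ in order to absorb both the $|\mathcal U|$ coming from the union bound over light atoms and one factor $2^{f(n)}$ coming from Markov -- together with the observation $P_c(SU)\leq P_c(S)$. This is the classical privacy-amplification / spoiling-knowledge estimate of \cite{Maurer00,Bloch11}, so in the final write-up I would present it as an adaptation of that argument and supply the few lines above.
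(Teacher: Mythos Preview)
Your argument is correct: the translation to collision probabilities, the split at the threshold $\lambda=2^{-f(n)}/(2|\mathcal U|)$, the bound $P_c(SU)\leq P_c(S)$, and the Markov step all go through, and the constants match (the only cosmetic slip is writing strict ``$<$'' where Markov gives ``$\leq$'', which is harmless since the lemma claims $\geq 1-2^{-f(n)}$). Note, however, that the paper does not supply its own proof of this lemma: it is stated with a citation to \cite{Maurer00,Bloch11} and used as a black box, so there is nothing to compare against beyond observing that what you wrote is precisely the classical spoiling-knowledge estimate from those references.
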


We are now ready to develop the finite length analysis. Let $b_0 >0$, $b_1>b_0$, $\epsilon>0$, and $f(n),g(n) \in \mathbb{R}^{\mathbb{N}}$. 
We define 
\begin{align*}
\Theta  &\triangleq  \mathds{1}\left\{( Z^n, X_Q^n) \in  \mathcal{A}_{\epsilon,\epsilon'}^{(n)}(Z,X_Q) \text{ and } Z^n \in \mathcal{B}_{\epsilon,\epsilon'}^{(n)}(Z,X_Q) \right\},\\
\Upsilon  &\triangleq \mathds{1}\big\{\mathbb{H}_{c}(X_Q^{n} | Z^n = z^{n}, \Theta =1) \ifthenelse{\boolean{double}}{\\ &\hspace{2cm}}{}- \mathbb{H}_{c}(X^{n} | Z^n = z^{n}, \Theta =1, F=f_{rec}) \ifthenelse{\boolean{double}}{\\ & \hspace*{5cm}}{}\leq \log|\mathcal{F}| +2 f(n) +2\big\}.
\end{align*}
By Lemmas \ref{lem1} and \ref{lem2}, $\mathbb{P}(\Upsilon=1,\Theta=1) \geq 1 - \delta_{\epsilon,f}^{(3)}(n)$, with $\delta_{\epsilon,f}^{(3)}(n)  \triangleq 2^{-f(n)} + \delta_{\epsilon}^{(1)}(n) $
and \vspace*{-0.cm} \begin{align}
\noindent\mathbb{H}(K|GZ^{n}F)   
 \geq \left(1 - \delta_{\epsilon,f}^{(3)}(n) \right) \mathbb{H}(K |G Z^{n}F, \Upsilon = 1, \Theta=1).
  \label{eqk0}
\end{align}
To lower bound $\mathbb{H}(K |G Z^{n}F, \Upsilon=1, \Theta=1)$, we lower bound $\mathbb{H}_{c}(X^{N} | Z^{n}=z^{n}, F=f_{rec}, \Theta =1, \Upsilon =1)$ for any $z^n \in \mathcal{B}_{\epsilon,\epsilon'}^{(n)}(Z,X_Q)$ by means of Lemmas \ref{lem1} and \ref{lem2}, and use \cite[Theorem 4.4]{Bloch11} \cite{Bennett95}. By definition of $\Upsilon$,\vspace*{-0.cm}
\begin{multline}
  \mathbb{H}_{c}(X_Q^{n} | Z^n = z^{n},F=f_{rec}, \Theta =1, \Upsilon =1)   \ifthenelse{\boolean{double}}{\\}{}
\stackrel{\phantom{}}{\geq} \mathbb{H}_{c}(X_Q^{n} | Z^{n}=z^{n}, \Theta =1) - \log |\mathcal{F}| - 2f(n) -2 \ifthenelse{\boolean{double}}{\\}{}
\stackrel{(\ast)}{\geq}  \tilde{k},\label{eqdeb2}
\end{multline}
where $(\ast)$ follows from Lemma \ref{lem1}, with
$$ \tilde{k} \triangleq n(H(X_Q|Z) -\epsilon - \epsilon') + \log_2 ( 1- \delta_{\epsilon}^{(2)}(n)) - l_{\textup{rec}} - 2f(n) -2. 
$$
We set $\bar{k} \triangleq \left\lfloor \tilde{k} -g(n) \right\rfloor $, and, by \cite[Theorem 4.4]{Bloch11} \cite{Bennett95} and (\ref{eqdeb2}), we obtain
\begin{equation}
\mathbb{H}(K |G, Z^{n}=z^n,F=f_{rec}, \Upsilon=1, \Theta=1) \geq \bar{k} -\delta_g^{(4)}(n), \label{eqk1}
\end{equation}
with $\delta_g^{(4)}(n) \triangleq \log \left( 1+ 2^{-g(n)} \right)$.
Then, by (\ref{eqk0}) and (\ref{eqk1})
$$\mathbb{H}(K|GZ^{n}F) \geq \left(1 - \delta_{\epsilon,f}^{(3)}(n)\right) \left(\bar{k}- \delta_g^{(4)}(n) \right).  
$$
To summarize, we obtain
\begin{align*}
\mathbf{P}_e  \leq \mathbf{P}_e^{\textup{rec}}, \text{ }
\mathbf{U}_e  \leq \delta_{\epsilon,f,g}^{(5)}(n), \text{ }
\mathbf{L}_e  \leq  \delta_{\epsilon,f,g}^{(5)}(n),
\end{align*}
\begin{multline*}
\bar{k} (\epsilon,\epsilon',b_0,b_1,f,g) \ifthenelse{\boolean{double}}{\\}{} \triangleq \bigg\lfloor n(\mathbb{H}(X_Q|Z) -\epsilon - \epsilon') + \log_2 ( 1- \delta_{\epsilon}^{(2)}(n)) \ifthenelse{\boolean{double}}{\\}{}- l_{\textup{rec}} - 2f(n) -g(n) -2 \bigg\rfloor,
\end{multline*}
with
\vspace*{-0em}
\begin{align*}
\delta_{\epsilon}^{(1)}(n) & \triangleq \sum_{i=1}^2 e^{ nC_Z^{(i)}(\epsilon) } + 2^{-nb_1\epsilon} + 2^{-n(b_1-b_0)\epsilon},\\
\delta_{\epsilon}^{(2)}(n) & \triangleq  2^{-nb_0\epsilon},\\
\delta_{\epsilon,f}^{(3)}(n)  & \triangleq 2^{-f(n)} + \delta_{\epsilon}^{(1)}(n), \\
\delta^{(4)}_g(n) & \triangleq  \log_2 \left(1+2^{-g(n)} \right), \\
\delta_{\epsilon,f,g}^{(5)}(n) & \triangleq \delta_{\epsilon,f}^{(3)}(n)  \left( k -\delta^{(4)}_g(n) \right) + \delta^{(4)}_g(n).
\end{align*}
%

Recall that $(\mathcal{X}_Q\mathcal{Z},p_{X_QZ})$ is a memoryless source with ${X}_Q$ discrete and ${Z}$ continuous with known statistics. 
%
For fixed $\epsilon_L>0$, $\epsilon_U>0$, and block-length $n$, sequential secret-key generation with privacy amplification performed with universal hash function therefore ensures $\mathbf{L}<\epsilon_L$, $\mathbf{U}< \epsilon_U$ and $\mathbf{P}_{\mathrm{d}}  \leq \mathbf{P}_{\textup{rec}}$, while achieving the following generated secret-key length 

\begin{equation} \label{eq:kbound}
k \triangleq  \sup_{  (\epsilon,\epsilon',b_0,b_1,f,g ) \in \mathcal{C}(\epsilon_L,\epsilon_U)} \bar{k} (\epsilon,\epsilon',b_0,b_1,f,g),
\end{equation}
where
\begin{multline*}
\mathcal{C}(\epsilon_L,\epsilon_U) \triangleq \Big\{ (\epsilon,\epsilon',b_0,b_1,f,g) \in [0,1]^4 \times\mathbb{R}_+^{\mathbb{N}} \times \mathbb{R}_+^{\mathbb{N}} : \\ 
 \exists n_0, \forall n> n_0,  \sum_{i=1}^2 \left( e^{nC_{Z}^{(i)}(\epsilon')} + e^{nC_{ZX_Q}^{(i)}(\epsilon')} \right) \leq 2^{-b_1 n \epsilon}, \\ b_1 >b_0,\, f(n)=o(n),\, g(n)=o(n),\,  \delta_{\epsilon,f,g}^{(5)}(n) \leq \min(\epsilon_L,\epsilon_U)  \Big\}.
\end{multline*}
%

%
%
%
%
Note that, asymptotically, the corresponding achievable key rate is $ R_{\textup{low}} \triangleq \mathbb{I}(X_Q;Y_Q) - \mathbb{I}(X_Q;Z)$, which is a lower bound of the secret-key capacity $C_s$ \cite{Maurer93}. 
\subsection{Numerical results}
\label{sec:numerical-results}
 We now use the experimental measurements in Section~\ref{sec:esi} to evaluate our achievable bound in~\eqref{eq:kbound}. Without losing generality, and to separate the finite length effect of reconciliation and privacy amplification, we also assume that the reconciliation has efficiency $\beta \in [0,1]$ (see \cite{Bloch11})  so that $l_{\textup{rec}} = n (\mathbb{H}(X) - \beta {I}(X;Y))$ bits are leaked to the eavesdropper. The best case scenario, obtained for $\beta =1$, would yield $\mathbb{H}(X|Y)$ bits leaked during the reconciliation step. Next, to compute the key length, we estimated the quantities 
$$\mathbb{H}(X_{Q}|Z), \text{ } C_{Z}^{(i)}(\epsilon),  \text{ } C_{ZX_Q}^{(i)}(\epsilon),$$
for some $\epsilon>0$ and $i\in \llbracket 1,2 \rrbracket$. We then numerically optimized an estimate of $k$ in~\eqref{eq:kbound} for which $\mathbf{U}<10^{-3}$, $\mathbf{L}<10^{-3}$, by testing a large range of parameter values $(\epsilon,\epsilon',b_0,b_1,f,g)$, where $f$ and $g$ are taken of the form $f(n) \triangleq n^{-\alpha_1}$, $g(n) \triangleq n^{-\alpha_2}$, with $n \in \mathbb{N}$, $\alpha_1 >0$ and $\alpha_2>0$. Note that, to ensure the first constraint in $\mathcal{C}(\epsilon_L,\epsilon_U)$, we guarantee instead the sufficient condition $4 \exp \left[ n  C_{\textup{max}} \right] \leq 2^{-b_1n\epsilon}$, i.e., $  n (C_{\textup{max}}+b_1 \epsilon \ln 2) + 2 \ln 2 \leq 0 $, where $C_{\textup{max}} \triangleq \max_{i\in\llbracket 1,2\rrbracket} \left(C_{Z}^{(i)}(\epsilon'),C_{ZX_Q}^{(i)}(\epsilon') \right)$.  The key rates are finally reported in Fig.~\ref{fig:ratio} in terms of the ratio $\eta$ of the finite-length key rate to the asymptotic key-rate defined as
$$
\eta \triangleq \frac{k/n}{R_{\textup{low}}},
$$
where $R_{s}= \avgI{X_Q;Y_Q}-\avgI{X_Q;Z}$. For our specific statistics $p_{X_QY_QZ}$, we have $R_{s}\approx 1.46 \text{ bits}$.\footnote{Note that $R_s$ is only an achievable key rate. The secret-key capacity $C_s$ is only know to satisfy $C_s \leq  \mathbb{I}(X;Y|Z) \approx 1.76 \text{ bits}$.} The corresponding values of the parameters $(\epsilon,\epsilon',b_0,b_1,f,g)$ are also reported in Table~\ref{table}.

%
%


\begin{figure}
\centering 
    \ifthenelse{\boolean{double}}{
    \setlength\figureheight{6cm} 
    \setlength\figurewidth{7cm}}{
    \setlength\figureheight{7cm} 
    \setlength\figurewidth{10cm}} 
%
%
%
\definecolor{mycolor1}{rgb}{0.1894645708138,0.1894645708138,0.1894645708138}%
\definecolor{mycolor2}{rgb}{0.287174588749259,0.287174588749259,0.287174588749259}%
\definecolor{mycolor3}{rgb}{0.435275281648062,0.435275281648062,0.435275281648062}%
\definecolor{mycolor4}{rgb}{0.659753955386447,0.659753955386447,0.659753955386447}%
\begin{tikzpicture}

\begin{axis}[%
width=\figurewidth,
height=\figureheight,
scale only axis,
xmin=0,
xmax=50000,
xtick={    0, 10000, 20000, 30000, 40000, 50000},
xlabel={Block-length $n$},
xmajorgrids,
ymin=0.3,
ymax=1,
ytick={0.3, 0.4, 0.5, 0.6, 0.7, 0.8, 0.9,   1},
ylabel={$\text{Ratio }\eta$},
ymajorgrids,
axis x line*=bottom,
axis y line*=left,
legend style={at={(0.628754208754208,0.0793893129771)},anchor=south west,draw=black,fill=white,legend cell align=left}
]
\addplot [
color=darkgray!50!black,
solid,
smooth,
very thick,
mark=*,
mark options={solid,fill=black,draw=black}
]
table[row sep=crcr]{
1000 0.573817300632861\\
2000 0.709954264009578\\
5000 0.819952930417965\\
10000 0.871276565610988\\
20000 0.910245771377573\\
30000 0.924959908169206\\
50000 0.943143268577557\\
};
\addlegendentry{$\beta = 1$};

\addplot [
color=mycolor1,
solid,
smooth,
very thick,
mark=*,
mark options={solid}
]
table[row sep=crcr]{
1000 0.51203422290661\\
2000 0.648171186283327\\
5000 0.758169852691714\\
10000 0.809493487884736\\
20000 0.848462693651321\\
30000 0.863176830442955\\
50000 0.881360190851305\\
};
\addlegendentry{$\beta = 0.95$};

\addplot [
color=mycolor2,
solid,
smooth,
very thick,
mark=*,
mark options={solid}
]
table[row sep=crcr]{
1000 0.450251145180359\\
2000 0.586388108557076\\
5000 0.696386774965463\\
10000 0.747710410158485\\
20000 0.78667961592507\\
30000 0.801393752716704\\
50000 0.819577113125054\\
};
\addlegendentry{$\beta = 0.90$};

\addplot [
color=mycolor3,
solid,
smooth,
very thick,
mark=*,
mark options={solid}
]
table[row sep=crcr]{
1000 0.388468067454107\\
2000 0.524605030830824\\
5000 0.634603697239211\\
10000 0.685927332432234\\
20000 0.724896538198819\\
30000 0.739610674990452\\
50000 0.757794035398802\\
};
\addlegendentry{$\beta = 0.85$};

\addplot [
color=mycolor4,
solid,
smooth,
very thick,
mark=*,
mark options={solid}
]
table[row sep=crcr]{
1000 0.326684989727856\\
2000 0.462821953104573\\
5000 0.57282061951296\\
10000 0.624144254705982\\
20000 0.663113460472567\\
30000 0.677827597264201\\
50000 0.696010957672551\\
};
\addlegendentry{$\beta = 0.80$};

\end{axis}
\end{tikzpicture}%
\caption{Ratio $\eta$, for $\mathbf{U}<10^{-3}$ and $\mathbf{L}<10^{-3}$}
\label{fig:ratio}
\end{figure}
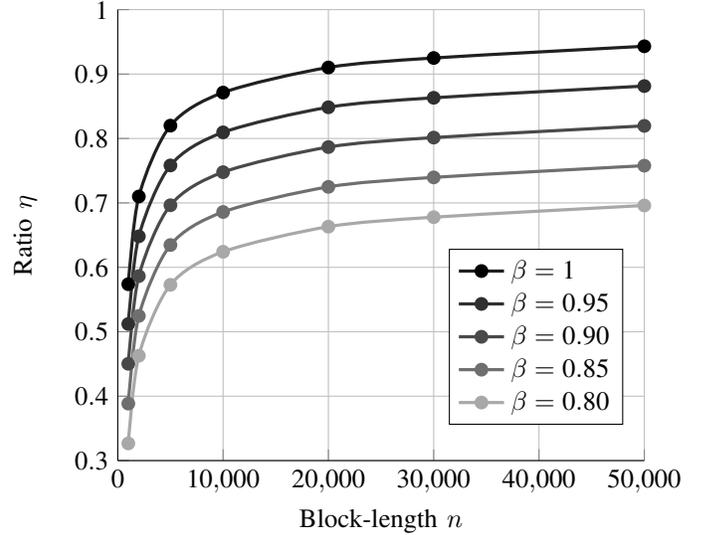 

Fig.~\ref{fig:ratio} shows that the achievable secret-key rates may be much lower than the upper bound $R_{\textup{low}}$ in the finite block-length regime. For instance, even if the reconciliation protocol has an efficiency of 90\% for $n=$30,000, the actual achievable secret-key rate is 20\% lower than $R_{\textup{low}}$ as shown in Fig.~\ref{fig:ratio}. 

\begin{table} \footnotesize
\renewcommand\arraystretch{1.4}  
\begin{center}
 \caption{\label{table} Parameters for the values of $\eta$ on Figure \ref{fig:ratio}}

\begin{tabular}{|c|} 
 \hline
 $n$  \\ \hline
 \hline
 $\epsilon$ \\
 \hline
 $\epsilon'$ \\
 \hline
 $b_1$ \\
 \hline
 $b_0$  \\
 \hline
 $\alpha_1$ \\
 \hline
 $\alpha_2$  \\
 \hline
\end{tabular}\hspace*{-1pt}
\begin{tabular}{|c|c|c|c|c|c|} 
 \hline
  1,000  & 2,000 & 5,000 & 10,000 & 20,000 & 50,000 \\ \hline
 \hline
 0.2590 & 0.1840 & 0.1199 & 0.0900 & 0.0625 & 0.0410\\
 \hline
 0.3019 & 0.2034 & 0.1286 & 0.0912 & 0.0653 &  0.0407\\
 \hline
 0.1000 & 0.0660 &  0.0420 & 0.0289 & 0.0220 & 0.0140\\
 \hline
0.0100 & 0.0010 &  0.0010 & 0.0010 & 0.0010 & 0.0010 \\
 \hline
 0.4400 & 0.4200 & 0.3800 & 0.3500 & 0.3300 & 0.3100 \\
 \hline
 0.4400 & 0.4200 & 0.3800 & 0.3500 & 0.3300 & 0.3100 \\
 \hline 
\end{tabular}
\end{center}
\end{table}

\section{Conclusion and discussion}
\label{sec:concl-disc}


The main weakness of secret-key generation from wireless channel gains is the difficulty to clearly establish conditions for secrecy without a precise knowledge of the eavesdropper's statistics. Our experiments show that there exists no simple relationship between the eavesdropper's proximity and the correlation of its observations in a typical indoor environment, which suggests that traditional parametric models of wireless channels should be used with great care. Furthermore, even if characterizing the wireless environment is possible, a precise estimation of the source statistics is a challenge in itself, which has a non-negligible effect on the estimation of actual achievable key-rates and the operation of privacy amplification with a finite number of samples.

To address this weakness, several solutions are worth investigating. First, one might want to restrict the use of secret-key generation schemes to situations in which the channel gains possess more entropy, and in which an eavesdropper's measurements would exhibit less statistical dependence. As suggested by the results in~\cite{Mathur2008,Ye2010}, high-mobility environment might be more suitable, and one might consider exploiting the phase of the complex channel gain instead of the magnitude. However, exploiting the phase is technically more challenging, and would require precise synchronization between terminals. Second, one might want to turn to \emph{channel models} for secret-key generation~\cite{Ahlswede1993}, in which a legitimate terminal injects artificial noise in the channel, combined with multiple antennas. In particular, the results of~\cite{He2010, Chou13c} suggest that one can get away with a mere assumption regarding the number of antennas of the eavesdropper, which completely removes the need for the intricate eavesdropper's statistics estimation.

\appendices

\section{Continuous Typicality} \label{sec:conttyp}
										
This section is adapted from~\cite{Koetter11}.

Let $\mathcal{X}$, $\mathcal{Y}$ be two discrete or continuous alphabets. Let $b_0 >0$, $b_1>b_0$ and $\epsilon>0$. Let $n \in \mathbb{N}$. Let $\{ (X_i,Y_i)\}_{i=1}^n$ be i.i.d. random variables drawn according to the joint distribution $p_{XY}$.  Define
$$
f(x^n) \triangleq \left| - \frac{1}{n} \log p(x^n) - H(X) \right|,
$$
$$
f(y^n) \triangleq \left| - \frac{1}{n} \log p(y^n) - H(Y) \right|,
$$
$$
f(x^n,y^n) \triangleq \left| - \frac{1}{n} \log p(x^n,y^n) - H(X,Y) \right|,
$$
where $H$ denotes either the discrete entropy $\mathbb{H}$ or the differential entropy $\mathds{h}$.
Define
\begin{multline*}
\mathcal{A}_{\epsilon,\epsilon'}^{(n)}(X,Y) \triangleq \big\{  ({x}^n, {y}^n) \in \mathcal{X}^n \times \mathcal{Y}^n  : \ifthenelse{\boolean{double}}{\\}{} f(x^n) \leq \epsilon , \text{ }f(y^n) \leq \epsilon', \text{ }f(x^n, y^n) \leq \epsilon' \big\},
\end{multline*}
where $\epsilon'$ is such that\footnote{
Because we want to numerically compute $\epsilon'$, we do not do the better choice 
\begin{multline*}
\epsilon' \triangleq (1+ \epsilon) \inf \big\{ \epsilon'' >0 : \ifthenelse{\boolean{double}}{\\}{}\exists n_0, \forall n> n_0, \mathbb{P} \left[ f(Y^n) > \epsilon'' \text{ or } f(X^n,Y^n)  > \epsilon'' \right]  \leq 2^{-b_1N \epsilon}  \big\}, 
\end{multline*}
as in \cite{Koetter11}.
}
$$
\exists n_0, \forall n> n_0, \mathbb{P} \left[ f(Y^n) > \epsilon'\text{ or } f(X^n,Y^n)  > \epsilon'\right]  \leq 2^{-nb_1  \epsilon}. 
 $$
\begin{remark}
Note that $\epsilon'$ is a function of $\epsilon$ and is not uniquely defined for a fixed $\epsilon$.
\end{remark}
Define 
\begin{multline*}
\widehat{\mathcal{A}}_{\epsilon,\epsilon'}^{(n)}(X,Y) \triangleq \Bigg\{  (x^n, y^n) \in \mathcal{A}_{\epsilon,\epsilon'}^{(n)}(X,Y) : \ifthenelse{\boolean{double}}{\\}{}\mathbb{P} \left[  \left( \mathcal{A}_{\epsilon,\epsilon'}^{(n)}(X,Y) \right)^{\mathsf c}  \bigg| X^n = x^n\right] \leq 2^{-nb_0\epsilon} \Bigg\},
\end{multline*}
with 
$$
\mathbb{P} \left[  \left( \mathcal{A}_{\epsilon,\epsilon'}^{(n)}(X,Y) \right)^{\mathsf c}  \bigg| X^n = x^n \right] \triangleq \sum_{y^n:(x^n,y^n) \notin \mathcal{A}_{\epsilon,\epsilon'}^{(n)}} p(y^n|x^n).
$$
Define
$$
C_X^{(1)}(\epsilon) \triangleq  \min_{s>0} \left( \ln \mathbb{E} [p(X)^{-s/ \ln 2}] - s(H(X)+ \epsilon) \right),
$$ 
$$
C_X^{(2)}(\epsilon) \triangleq \min_{s>0} \left( \ln \mathbb{E} [p(X)^{s / \ln 2}] - s( - H(X)+ \epsilon) \right) .
$$
We similarly define $C_Y^{(i)}(\epsilon)$ and $C_{XY}^{(i)}(\epsilon)$ for $i \in \llbracket 1,2 \rrbracket$.
\begin{remark}
For $i \in \llbracket 1,2 \rrbracket$, the expression $C_X^{(i)}(\epsilon)$, $C_Y^{(i)}(\epsilon)$ and $C_{XY}^{(i)}(\epsilon)$ can be specified. For instance, in $C_X^{(1)}(\epsilon)$, 
$$
s_0 \triangleq \mathrm{argmin}_{s>0} \left( \ln \mathbb{E} [p(X)^{-s/ \ln 2}] - s (H(X) + \epsilon) \right),
$$
  is such that 
$$
\frac{1}{\ln 2}\frac { \mathbb{E}[-p(X)^{-s_0/ \ln 2} \log p(X)]}{\mathbb{E}[p(X)^{-s_0 / \ln 2 }] } = H(X) + \epsilon.
$$
\end{remark}
\begin{lemma}[Adapted from \cite{Koetter11}] \label{lem_cont}
Consider $\mathcal{A}_{\epsilon,\epsilon'}^{(n)}(X,Y)$, where $\epsilon'$ is fixed and chosen such that
$$
  \exists n_0, \forall n> n_0,  \sum_{i=1}^2 \left(  e^{ n C_Y^{(i)}(\epsilon') } + e^{ n C_{XY}^{(i)}(\epsilon')} \right) \leq 2^{-b_1 n\epsilon}.
$$
\begin{enumerate}
\item Let $(x^n,y^n) \in \mathcal{A}_{\epsilon,\epsilon'}^{(n)}(X,Y)$.
$$
p (y^n | x^n) \leq 2^{-n(H(Y|X)-\epsilon' - \epsilon)},
$$
\item Let $n>n_0$.
$$
\mathbb{P} \left[  \left( \mathcal{A}_{\epsilon,\epsilon'}^{(n)}(X,Y) \right)^{\mathsf c} \right]  \leq \sum_{i=1}^2 e^{nC_X^{(i)}(\epsilon)} + 2^{-nb_1\epsilon},
$$
\item  Let $n>n_0$.
$$
\mathbb{P} \left[  \left( \widehat{\mathcal{A}}_{\epsilon}^{(n)}(X,Y) \right)^{\mathsf c} \right] \leq \sum_{i=1}^2 e^{n C_X^{(i)}(\epsilon)} + 2^{-nb_1 \epsilon} + 2^{-n(b_1-b_0)\epsilon}.
$$
\end{enumerate}
\end{lemma}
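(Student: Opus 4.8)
The plan is to prove the three parts separately, since each is a standard asymptotic-equipartition-type estimate adapted to the case where one alphabet may be continuous; throughout I write $p(x^n)=\prod_i p(x_i)$ (and similarly for $p(y^n)$, $p(x^n,y^n)$) and read $\log$ as $\log_2$ so that $H$ is in bits. Part~1 I would establish by pure algebra from membership in $\mathcal{A}_{\epsilon,\epsilon'}^{(n)}(X,Y)$. Writing $p(y^n\mid x^n)=p(x^n,y^n)/p(x^n)$ gives $-\log p(y^n\mid x^n)=-\log p(x^n,y^n)-\bigl(-\log p(x^n)\bigr)$. The constraint $f(x^n,y^n)\leq\epsilon'$ yields the lower bound $-\log p(x^n,y^n)\geq n(H(X,Y)-\epsilon')$, while $f(x^n)\leq\epsilon$ yields the upper bound $-\log p(x^n)\leq n(H(X)+\epsilon)$. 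Subtracting and using $H(Y\mid X)=H(X,Y)-H(X)$ gives $-\log p(y^n\mid x^n)\geq n(H(Y\mid X)-\epsilon'-\epsilon)$, which is the claim after exponentiating.

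For part~2 I would bound $\bigl(\mathcal{A}_{\epsilon,\epsilon'}^{(n)}(X,Y)\bigr)^{\mathsf c}$ by a union bound over its three defining constraints, grouped as the $X$-event $\{f(X^n)>\epsilon\}$ and the joint event $\mathcal{G}\triangleq\{f(Y^n)>\epsilon' \text{ or } f(X^n,Y^n)>\epsilon'\}$. The event $\mathcal{G}$ is controlled by $2^{-nb_1\epsilon}$ directly from the hypothesis on $\epsilon'$: indeed $\sum_i\bigl(e^{nC_Y^{(i)}(\epsilon')}+e^{nC_{XY}^{(i)}(\epsilon')}\bigr)\leq 2^{-b_1 n\epsilon}$ is exactly a Chernoff certificate giving $\mathbb{P}[\mathcal{G}]\leq 2^{-nb_1\epsilon}$. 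The remaining work is to show $\mathbb{P}[f(X^n)>\epsilon]\leq\sum_{i=1}^2 e^{nC_X^{(i)}(\epsilon)}$ by a Chernoff bound on the i.i.d.\ sum $\sum_i\bigl(-\log p(X_i)\bigr)$: splitting the absolute value into the upper deviation $-\tfrac1n\log p(X^n)>H(X)+\epsilon$ and the lower deviation $-\tfrac1n\log p(X^n)<H(X)-\epsilon$ and tilting with a parameter $s>0$, the moment generating function $\mathbb{E}\bigl[e^{\mp s\log_2 p(X)}\bigr]=\mathbb{E}\bigl[p(X)^{\mp s/\ln 2}\bigr]$ reproduces exactly the two exponents $C_X^{(1)}(\epsilon)$ and $C_X^{(2)}(\epsilon)$ after minimizing over $s$. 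Adding the two deviation bounds to $\mathbb{P}[\mathcal{G}]$ gives the stated estimate.

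For part~3 I would relate $\bigl(\widehat{\mathcal{A}}_{\epsilon,\epsilon'}^{(n)}(X,Y)\bigr)^{\mathsf c}$ to the part~2 bound plus a single Markov term. By definition $\bigl(\widehat{\mathcal{A}}_{\epsilon,\epsilon'}^{(n)}(X,Y)\bigr)^{\mathsf c}=\bigl(\mathcal{A}_{\epsilon,\epsilon'}^{(n)}(X,Y)\bigr)^{\mathsf c}\cup\{(x^n,y^n)\in\mathcal{A}_{\epsilon,\epsilon'}^{(n)}(X,Y):\,g(x^n)>2^{-nb_0\epsilon}\}$, where $g(x^n)\triangleq\mathbb{P}\bigl[(\mathcal{A}_{\epsilon,\epsilon'}^{(n)}(X,Y))^{\mathsf c}\mid X^n=x^n\bigr]$. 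On the set $\mathcal{A}_{\epsilon,\epsilon'}^{(n)}(X,Y)$ the $X$-constraint already holds, so there $g(x^n)=\mathbb{P}[\mathcal{G}\mid X^n=x^n]$, whose expectation over $X^n$ equals $\mathbb{P}[\mathcal{G}]\leq 2^{-nb_1\epsilon}$. Markov's inequality at threshold $2^{-nb_0\epsilon}$ then bounds the probability of this bad-conditioning set by $2^{-nb_1\epsilon}/2^{-nb_0\epsilon}=2^{-n(b_1-b_0)\epsilon}$, and summing with the part~2 bound on $\mathbb{P}\bigl[(\mathcal{A}_{\epsilon,\epsilon'}^{(n)}(X,Y))^{\mathsf c}\bigr]$ produces exactly $\sum_{i=1}^2 e^{nC_X^{(i)}(\epsilon)}+2^{-nb_1\epsilon}+2^{-n(b_1-b_0)\epsilon}$.

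The main obstacle is the Chernoff step in part~2: one must verify that the exponential tilting of the i.i.d.\ log-likelihood sum reproduces precisely the minimized exponents $C_X^{(i)}(\epsilon)$ and, because the alphabet of $Y$ may be continuous, that the relevant moment generating functions $\mathbb{E}\bigl[p(X)^{\mp s/\ln 2}\bigr]$ and the entropies involved are finite and the minimizing $s$ exists, so the bound is nonvacuous. Once this large-deviation certificate is in place, parts~1 and~3 are short, being respectively an algebraic manipulation and a single application of Markov's inequality to the conditional miss probability.
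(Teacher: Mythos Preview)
Your proposal is correct and follows essentially the same approach as the paper: part~1 by the same algebraic manipulation of $p(y^n\mid x^n)=p(x^n,y^n)/p(x^n)$, part~2 by the same union-bound split into the $X$-event (handled by Chernoff on $-\log p(X_i)$, reproducing $C_X^{(1)},C_X^{(2)}$) and the event $\mathcal{G}$ (handled by the hypothesis on $\epsilon'$), and part~3 by the same Markov-type argument on the conditional miss probability, which the paper writes out as an explicit sum over the bad set $\mathcal{C}^n$ rather than naming Markov's inequality. Your worry about finiteness of the tilted moments and existence of the minimizing $s$ is not addressed in the paper either; it is implicitly assumed as part of the setup.
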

\begin{proof} 
We first need to verify that $\epsilon'$ is well defined.

By Chernoff's bound, applied to $\{ -\log p(X_i)\}_{i=1}^n$, and $\{ \log p(X_i)\}_{i=1}^n$, we obtain
$$
\mathbb{P} \left[ -\frac{1}{N} \log p(X^n) > H(X) + \epsilon \right] \leq e^{nC_X^{(1)}(\epsilon)},
$$
and
$$
\mathbb{P} \left[ \frac{1}{N} \log p(X^n) > - H(X) + \epsilon \right] \leq e^{nC_X^{(2)}(\epsilon)}.
$$
Hence, by the union bound
$$
\mathbb{P} \left[ f(X^n) > \epsilon \right] \leq \sum_{i=1}^2 e^{nC_X^{(i)}(\epsilon)}. 
$$
Similarly,
$$
\mathbb{P} \left[ f(Y^n) > \epsilon \right] \leq \sum_{i=1}^2 e^{nC_Y^{(i)}(\epsilon)},
$$
$$
\mathbb{P} \left[ f(X^n,Y^n) > \epsilon \right] \leq \sum_{i=1}^2 e^{n C_{XY}^{(i)}(\epsilon)},
$$
Hence, $\epsilon'$ is well defined.
\begin{enumerate}
\item Let $(x^n,y^n) \in \mathcal{A}_{\epsilon,\epsilon'}^{(n)}(X,Y)$.

By definition of $\mathcal{A}_{\epsilon,\epsilon'}^{(n)}(X,Y)$, we have 
$$
p(x^n) \geq 2^{-n(H(X) + \epsilon)},
$$
$$
p(x^n,y^n) \leq 2^{ -n( H(XY)-\epsilon')},
$$
hence
$$
p(y^n|x^n) = \frac{p(x^n,y^n)}{p(x^n)} \leq 2^{-n( H(Y|X) - \epsilon' - \epsilon )}.
$$
\item Let $n>n_0$. By the union bound and the definition of $\epsilon'$,
\begin{align*}
\mathbb{P} \left[  \left( \mathcal{A}_{\epsilon,\epsilon'}^{(n)}(X,Y) \right)^{\mathsf c} \right] 
&\leq   \mathbb{P} \left[ f(X^n) > \epsilon \right] \ifthenelse{\boolean{double}}{\\&}{}+ \mathbb{P} [ f(Y^n) > \epsilon'  \text{ or } f(X^n, Y^n) > \epsilon']\\
& \leq \sum_{i=1}^2 e^{n C_X^{(i)}(\epsilon)} + 2^{-nb_1\epsilon}.
\end{align*}
\item Let $n>n_0$. 
Define
\begin{multline*}
\mathcal{C}^n \triangleq \Bigg\{ x^n \in \mathcal{X}^n : f(x^n) \leq \epsilon, \ifthenelse{\boolean{double}}{\\}{} \mathbb{P} \left[  \left( \mathcal{A}_{\epsilon,\epsilon'}^{(n)}(X,Y) \right)^{\mathsf c}  \bigg| X^n = x^n\right] > 2^{-nb_0\epsilon}  \Bigg\}, 
\end{multline*}
such that
\begin{align*}
&\mathbb{P} \left[  \left( \widehat{\mathcal{A}}_{\epsilon,\epsilon'}^{(n)}(X,Y) \right)^{\mathsf c} \right] - \mathbb{P} \left[  \left( \mathcal{A}_{\epsilon,\epsilon'}^{(n)}(X,Y) \right)^{\mathsf c} \right] \\
& = \mathbb{P} \left[ \left\{ (x^n, y^n) \in \mathcal{A}_{\epsilon,\epsilon'}^{(n)}(X,Y) : x^n \in \mathcal{C}^n \right\} \right] \\
& = \sum_{x^n \in \mathcal{C}^n} p(x^n) \mathbb{P} \left[ \mathcal{A}_{\epsilon,\epsilon'}^{(n)}(X,Y) \bigg| X^n = x^n \right] \\
& \leq \sum_{x^n \in \mathcal{C}^n} p(x^n)\\ 
& \stackrel{(a)}{\leq} 2^{nb_0 \epsilon} \sum_{x^n \in \mathcal{C}^n} p(x^n) \ifthenelse{\boolean{double}}{\times \\& \hspace{1em}}{}\mathbb{P} \left[ f(Y^n) > \epsilon'  \text{ or } f(X^n, Y^n) > \epsilon' \bigg| X^n = x^n \right] \\
& \leq 2^{nb_0 \epsilon} \sum_{x^n \in \mathcal{X}^n} p(x^n) \ifthenelse{\boolean{double}}{\times \\& \hspace{1em}}{}\mathbb{P} \left[ f(Y^n) > \epsilon'  \text{ or } f(X^n, Y^n) > \epsilon' \bigg| X^n = x^n \right] \\
& \stackrel{(b)}{\leq}  2^{-n(b_1-b_0)\epsilon},
\end{align*}
where (a) holds by definition of $\mathcal{C}^n$ and (b) holds by definition of $\epsilon'$.
\end{enumerate}
\end{proof}										
										
\section{Proof of Lemma \ref{lem1}}	\label{proof_lem1}

Let $z^n \in \mathcal{B}_{\epsilon,\epsilon'}^{(n)}(Z,X_Q)$. By Lemma \ref{lem_cont} in Appendix \ref{sec:conttyp} , we have $\mathbb{P}[\Theta=1]\geq \mathbb{P}\left[ \widehat{\mathcal{A}}_{\epsilon,\epsilon'}^{(n)}(Z,X_Q)\right] \geq 1-\delta_{\epsilon}^{(1)}(n).$  Then, by Bayes' rule,
\begin{multline}\label{eq1_a}
 \mathbb{P} (X_Q^n =x_Q^n|Z^n =z^n,\Theta=1) 
 \ifthenelse{\boolean{double}}{\\}{} = \frac{\mathbb{P} [ \Theta=1|X_Q^n =x_Q^n, Z^n =z^n ] \mathbb{P} (X_Q^n =x_Q^n|Z^n =z^n)}{\mathbb{P} [ \Theta=1| Z^n =z^n ]}.
\end{multline}
%
We have by definition of $\Theta$
\begin{align}
\mathbb{P} [\Theta =1|Z^n=z^n ]
& = \mathbb{P} \left(\mathcal{A}_{\epsilon,\epsilon'}^{(n)}(Z,X_Q)| Z^n = z^n\right)  \nonumber \\ \nonumber
& \geq 1- 2^{-nb_0\epsilon}  \\
& = 1 - \delta_{\epsilon}^{(2)}(n). \label{eq1_c}
\end{align}
We also have by Lemma \ref{lem_cont}.1
\begin{align} \label{eq1_b}
\mathbb{P}(X_Q^n =x_Q^n|Z^n =z^n) \leq 2^{ -n(\mathbb{H}(X_Q|Z) -\epsilon - \epsilon')},
\end{align}
for $(x_Q^n,z^n)$ satisfying $\mathbb{P} [ \Theta=1|X_Q^n =x_Q^n, Z^n =z^n ]>0$.\\
Hence, by (\ref{eq1_c}) and (\ref{eq1_b}), (\ref{eq1_a}) gives
\begin{align*}
\mathbb{P} (X_Q^n =x_Q^n|Z^n =z^n,\Theta=1) & \leq  \frac{2^{ -n(\mathbb{H}(X_Q|Z) -\epsilon - \epsilon')}}{1 - \delta_{\epsilon}^{(2)}(n)}. 
\end{align*}
Finally,
\begin{align*}
\mathbb{H}_{c}(X_Q^n\ifthenelse{\boolean{double}}{&}{}|Z^n=z^n,\Theta=1)  \ifthenelse{\boolean{double}}{\\}{}
& \geq \mathbb{H}_{\infty}(X_Q^n|Z^n=z^n,\Theta=1) \\
& = - \log_2 \max_{x_Q^n} \mathbb{P} (X_Q^n =x_Q^n|Z^n =z^n,\Theta=1) \\
& \geq n(\mathbb{H}(X_Q|Z) -\epsilon - \epsilon') + \log_2 ( 1- \delta_{\epsilon}^{(2)}(n)).
\end{align*}

\ifthenelse{\boolean{double}}{\IEEEtriggeratref{25}}{}
\bibliographystyle{IEEEtran} 
\bibliography{biblio}

\end{document}